\newcommand{\anillo}{{\mathbb F}_2[{\bf x}]}
\newcommand{\cuerpo}{{\mathbb F}_2}
\newcommand{\idos}{{\mathbb  I}_2}
\newcommand{\degree}{\mbox{deg}}
\newcommand{\liff}{\leftrightarrow}
\newcommand{\co}{\subseteq}
\newcommand{\Land}{\bigwedge}
\newcommand{\te}{\vdash}
\newcommand{\N}{{\mathbb N}}
\newcommand{\m}{\models}
\renewcommand{\mod}{\mbox{\rm mod \ }}
\newcommand{\lif}{\to}
\newcommand{\Liff}{\quad \Longleftrightarrow \quad}
\newcommand{\Lif}{\quad \Longrightarrow \quad}
\newcommand{\Lifn}{\quad \Longrightarrow \quad}
\newcommand{\fdos}{{\mathbb F}_2}
\theoremstyle{plain}
 \newcommand{\ttp}[1]{\color{black} #1}
 \definecolor{orange}{rgb}{1,0.5,0.1}
\theoremstyle{plain}
\newtheorem{theorem}{Theorem}[section]
\newtheorem{lemma}[theorem]{Lemma}
\newtheorem{proposition}[theorem]{Proposition}
\newtheorem{corollary}[theorem]{Corollary}
\newtheorem{remark}[theorem]{Remark}
\newtheorem{definition}[theorem]{Definition}
\newtheorem{example}[theorem]{Example}
\pgfplotsset{compat=newest}
\begin{document}

\begin{frontmatter}

\title{A logic-algebraic tool for reasoning with Knowledge-Based Systems\footnote{This work was partially supported by  TIN2013-41086-P project (Spanish Ministry of Economy and Competitiveness), co-financed with FEDER funds.}}

\author{José A. Alonso-Jim\'{e}nez$^1$, Gonzalo A. Aranda-Corral$^2$, 
 Joaqu\'{\i}n Borrego-D\'{\i}az$^1$, M. Magdalena Fern\'{a}ndez-Lebr\'{o}n$^3$, M. Jos\'{e} Hidalgo-Doblado$^1$}

\address{$^1$Departamento de Ciencias de la Computaci\'{o}n e Inteligencia
  Artificial, E.T.S. Ingenier\'{\i}a Inform\'{a}tica,
    Universidad de Sevilla,  Avda. Reina
Mercedes s.n. 41012-Sevilla, Spain\\ 
  $^2$ Department of Information Technology, Universidad de Huelva
Crta. Palos de La Frontera s/n. 21819 Palos de La Frontera. Spain\\
$^3$Departamento de Matem\'{a}tica Aplicada I,
E.T.S. Ingenier\'{\i}a Inform\'{a}tica,
    Universidad de Sevilla,  Avda. Reina
Mercedes s.n. 41012-Sevilla, Spain}

\date{}

\begin{abstract}
A detailed  exposition of  foundations of  a logic-algebraic model
for reasoning with  knowledge bases specified by propositional  (Boolean)  logic is presented. The model is conceived from the logical translation of usual derivatives on polynomials (on residue rings) which is used to design   a new inference rule of algebro-geometric inspiration. Soundness and (refutational) completeness of the rule are proved. Some applications of the tools introduced in the paper are shown.
\end{abstract}

\begin{keyword}
Polynomial Semantics \sep  Symbolic Computing    \sep Automated Deduction \sep  Knowledge-Based Systems
\end{keyword}


\end{frontmatter}



\section{Introduction}

 Algebraic models for logic have been revealed as a useful tool for know\-led\-ge representation and mechanized reasoning. The relationship between certain algebraic structures and Computational Logic  provides methods and tools for building, compiling and reasoning with Knowledge-Based Systems (KBS) (see e.g.  \cite{racsam} for an introduction). This relationship also provides ma\-the\-ma\-tical foundations for a number of Knowledge Representation and Reasoning  (KRR) methods and algorithms, encompassing applications since  the pioneer works for classical bivalued logic \cite{kapur,hsiang} to  extensions for multi-valued logics \cite{jalonso,jalonso2,amai,polynomial}. The framework has also been extended to other logics for Artificial Intelligence (AI) as the paraconsistent logic \cite{janl}, and even towards other nonstandard reasoning tasks as the argument-based one (cf. \cite{mcs15}). One of the benefits of the interpretation of logic in polynomial rings  is that enables the use of powerful algebraic tools as 
 Gröbner Basis to compile KBS, exploiting this way the use of advanced Computer Algebra Systems in KRR.
 
 Roughly speaking, algebraic models for logic are mainly based on to specify, obtain and exploit  solutions for the logical entailment problem and other  related ones. Recall that the entailment problem in logic is stated as follows: given a Knowledge Base (KB) $K$, and  a formula $F$, to decide whether  $F$ is {a} logical consequence from $K$ (denoted by $K\models F$), that is, whether every model of $K$ is also model of $F$.

 This paper is focused in to expound with detail an(other) algebraic model for  KRR. 
Whilst aforementioned approaches do not need to design a new calculus (ideal membership translation -through Gröbner Basis- of entailment question is sufficient), the approach presented here consists of to design an inference rule -called {\em independence rule}- from an algebraic operation on polynomials\footnote{The part of the paper devoted to this is an extended version of \cite{calculemus}.}. This rule will allow address a number of other related problems.

 Although the independence rule  is inspired in Algebra, the idea is intimately related with KRR strategies which are oriented to  mitigate the complexity and size of the $KB$ (which may be of large size), with the hope of  reducing the computational cost of the deductive process when it is applied {\ttp  into specialized contexts} or use cases. Two strategies of this type are interesting for the purposes of the paper and motivate this work. 

The first is to facilitate the design of  divide-and-conquer strategies to deal with the entailment problem, a natural idea for managing KBs with hundreds of thousands of logical axioms with big size logical language (for example the strategy based on the reasoning with microtheories  \cite{CyC}). {\ttp In this case the problem of ensuring the completeness of the designed strategy arises, being a} critical issue the selection/synthesis of sound sub-KBs. In \cite{Amir} authors propose a partition-based strategy in order to obtain subtheories. It is based on a syntax level analysis that provides individual partitions where to reason locally, and distributed reasoning needs of methods to propagate information among different partitions. 

The second strategy to consider is based on the ad-hoc reduction {\ttp of KB for particular use cases} (distilling the KB for using in context-based reasoning, for example). In \cite{calculemus} authors propose to reduce $K$ to  $K'$, where $K\models K'$, and in $K'$ only the language of the goal formula $F$  is used (thus it is expected that the size of $K'$ to be  smaller than the size of $K$). Then the entailment problem with respect to $K'$ is considered. For this strategy to be successful -valid and complete- $K$ must be a {\em conservative extension} of $K'$ (or, equivalently, $K'$  a {\em conservative retraction} of $K$ \cite{calculemus}). A knowledge base $K'$ in the language $\mathcal L'$ is a conservative retraction of $K$ if $K$ is an extension of $K'$ such that every $\mathcal L'$-fórmula entailed by $K$ is also entailed by  $K'$.  Then the use of conservative retraction allows  to reduce the own KB we need to work, because it suffices to conservatively retracts the original KB to the specialized language of the formula-goal. A key question is how to compute such kind of sub-KBs.  

Whilst conservative extensions have been deeply investigated in several fields of Mathematical Logic and Computer Science (because they allow the formalization of several notions concerning refinements and modularity, see e.g. \cite{el+,mjoseh,fjesus}), solutions focused on its dual notion, the conservative retraction, are obstructed by its logical complexity (see e.g. \cite{lang}).

{\ttp Beside the analysis of above strategy, as a secondary motivation of the approach, it is worth to mention the study of  {\em relevance}  in knowledge bases.} {To analyze, locate and remove redundancies in KB is a way to refine and improve the efficiency of KBS. These type of analysis are important in topics such as probabilistic reasoning, information filtering, etc. (see also \cite{lang} for a general overview).}

The basic mechanism to obtain conservative retractions consists of eliminating, step by step, the variables that we wish to eliminate from language. Such a mechanism is called variable forgetting. Since the first analysis in cognitive robotics \cite{forget}, the problem of  variable forgetting is a widely studied technique in IA. In particular
the forgetting variable technique has been used to update or refine (logical, rule-based, CSP) programs. For example for resolution-based reasoning in specialized contexts (see e.g. \cite{Bledsoe}),  in CSP and optimization \cite{practical}, for simplification of rules \cite{Moinard} (included Answer Set Programming \cite{Eiter}). As it can be seen from these references, the interest in techniques for forgetting variables is not limited to classical (monotonous) logics. It has also received attention in the field of non-monotonous reasoning (including the computational complexity of the problems). In the (epistemic) modal logics for (multi)agency this technique would be very useful to represent knowledge-based games \cite{S5,vK}. In addition, in the reasoning under inconsistency the use of variable forgetting allows to weaken the KB to obtain consistent subKBs (eliminating the variables involved in the inconsistency). This topic, discussed below, is studied as a tool for solving SAT. In this context, providing methods for variable forgetting is a step towards the availability of retraction algorithms for programming paradigms based on logics of different nature.

{Taking into account the aforementioned motivations,} {\ttp the aim of this paper is twofold. {First}, we intend to present a complete and detailed exposition of the foundations of the {\em  independence rule} (That can be considered a tool for variable forgetting)},  whose basic ideas were published in \cite{calculemus}, since no detailed exposition has been published until now. In fact, we generalize the cited paper by stating the results for any operator that induces a conservative retraction, leading as consequence that our case,  the independence rule, is useful to compute the retractions. {\ttp It is also illustrated how the rule is useful to design methods for solving some questions on KB. In par\-ti\-cu\-lar the redundancy problem can be handled by the independence rule and Boolean derivatives (an essential tool to design this rule). }

{\ttp The structure of the paper is as follows. Section 2 is devoted to summarize the basics on the algebraic interpretation of propositional logics. In Sect. 3 the notion of forgetting operator is introduced, and we show how conservative retractions can be computed by means of these kind of operators, as well as logical calculus induced by them are sound and (refutationally) complete. Section 4 presents a forgetting operator inspired on the projection of algebraic varieties. The logical translation of the operator is presented as a inference rule in Sect. 5.  Boolean derivatives are used  for characterizing logical relevance (in the case of sensitive variables)  in algebraic terms (in section 6, Prop. \ref{sensitive}). Some illustrative applications of the tools presented are described in Sect. 7. The paper finishes with a discussion on the results presented in the paper as well as some ideas about future work.}

\section{Background}

\begin{figure}[t]\centering
  \includegraphics[width=11cm]{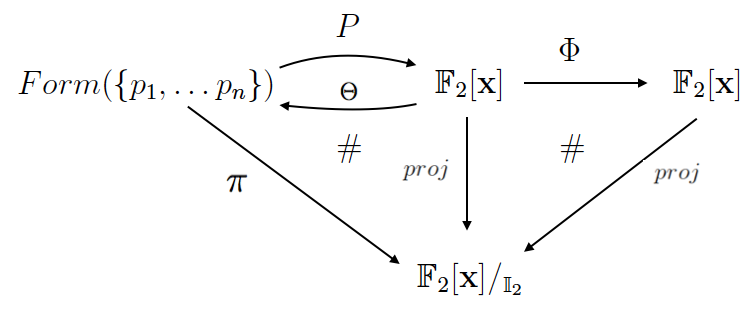}
  \caption{The framework\label{conmu}}
  \end{figure}

{In this section fundamental relations between propositional logic and polynomials with coefficients in finite fields (in our case, the finite field with two elements, $\cuerpo$) are summarized.}  {The main idea guiding the algebraic interpretation of logic is to identify a logical formula as a polynomial in such a way that the truth-value function induced by the formula could be understood as a polynomial function on $\cuerpo$.}

{The diagram showed above (Fig. \ref{conmu}) depicts the relationship between both structures, whose elements will be detailed in the following subsections. The ideal $\idos := \langle x_1+x_1^2,\dots , x_n+x^2_n \rangle \subseteq \anillo$ -on which we will talk about later- is used, and the map $proj$ is the natural projection on the quotient ring.} The remain elements of the diagram will be detailed bellow.

We assume throughout the paper that the reader is familiar with propositional logic as well as with basic principles on polynomial algebra on positive characteristics. 
\subsection{Propositional logic and conservative retraction}


A  propositional language is a finite set $\mathcal L=\{p_1,\dots,p_n\}$ of propositional symbols (also called propositional variables).  The set of formulas
$Form(\mathcal L)$ is built up from  in
the usual way, using the standard connectives $\lnot,\land,\lor,\lif$ and $\top$  ( $\top$ denotes the  constant {\em true}, and $\bot$ is $\lnot \top$). Given two formulas $F,G$ and $p\in \mathcal L$, we denote $F\{p/G\}$ the formula obtained replacing every occurrence of $p$ in $F$ by the formula $G$.

An interpretation (or valuation) $v$ is a  function $v: \mathcal L \to \{0, 1\}$. 
An interpretation $v$ is a {\bf model} of  $F\in Form (\mathcal L )$  if it makes $F$ true in the usual classical truth functional
way. {\ttp Analogously, it is said that a $v$ is a model of  $K$ ($v\models K$) if $v$ is model of  every formula in $K$. }
We denote by $Mod(F)$  the set of models
of $F$ (resp. $Mod(K)$ for the set of models of $K$).

A formula $F$  (or $K$ a KB) is {\bf consistent} if  it exhibits at least one model. It is said that  $K$ entails $F$ ($K\models F$) if  every model of $K$ is a model of $F$, that is, $Mod(K) \subseteq Mod(F)$.  Both notions can be naturally generalized to a KB, preserving the same notation. 
It is said that  $K$ and $K'$ are equivalent, $K' \equiv K$, if  $K\models K'$ and $K'\models K$. {The same notation will also be used for the equivalence with (and between) formulas.}

  It is said that $  K$ is an {\bf extension} of  $  K '$ if  ${\mathcal L}(  K')\subseteq {\mathcal L}(  K)$ and
 $$\forall F \in Form({\mathcal L}(  K')) [K' \models F \Lif K \models F ]$$
 
 $  K$ is a {\bf conservative extension} of  $  K '$ (or $  K'$ is a
 {\bf conservative retraction} of $  K$) if it is an extension such that every logic
 consequence of $  K$ expressed in the language ${\mathcal L}(  K ')$ is also consequence of $  K '$, 
 $$\forall F \in Form({\mathcal L}(  K')) [K \models F \Lif K' \models F ]$$
{\ttp  that is, $K$ extends $K'$ but  no new knowledge expressed by means of $\mathcal L (K')$ is added by $K$.}
 
Given  ${\mathcal L}' \subseteq {\mathcal L}(  K)$, a conservative retraction on the language ${\mathcal L}'$  always exists. The {\bf canonical conservative retraction}   of $K$ to $\mathcal L'$ is defined as:
$$[  K,\mathcal L']= \{F \in \mbox{Form}({\mathcal L}') \ : \    K \models F \}$$
That is, $[  K,\mathcal L']$ is the set of $\mathcal L'$-formulas which are entailed by $K$. In fact any conservative retraction on $\mathcal L'$ is equivalent to $[  K,\mathcal L']$. The actual issue is to present a finite axiomatization of such formula set.

\subsection{Propositional logic and the ring  ${\mathbb F}_2[{\bf x}]$}

{The ring $\anillo$ is naturally chosen for working with algebraic interpretations of logic.}
To clarify the notation, an identification
$p_i\mapsto x_i$ (or $p\mapsto x_p$) between $\mathcal L$ and
the set of indeterminates is fixed.

Notation on polynomials is standard. Given $\alpha=(\alpha_1,\dots , \alpha_n) \in  \N^n$, 
let us define 
$|\alpha |:=\mbox{max}\{\alpha_1,\dots , \alpha_n\}$. By ${\bf x}^{\alpha}$ we denote the monomial $x_1^{\alpha_1}\cdots x_n^{\alpha_n}$. 
 The {\em degree of}  $a({\bf x})\in
\anillo$, is
 deg$_{\infty}(a({\bf x})):=$max$\{|\alpha| \ : \ {\bf x}^{\alpha}
\mbox{ is a monomial of } a\}$. If deg$_{\infty}(a({\bf x}))\leq 1$, the polynomial $a({\bf x})$
we shall denote a {\bf polynomial formula}. 
{\ttp It is defined $\degree_i{(a({\bf x}))}$ as the  
degree  w.r.t. $x_i$.}
  
  \subsection{Translation from formulas and vice versa}
  
%

The  translation of Propositional Logics into Polynomial Algebra  is based 
on the following translation (see Fig. \ref{conmu}, left diagram):

The map $P: Form({\mathcal L}) \to \anillo$ is defined by: 
  \begin{itemize}
\item $ P(\bot)=0,  P(p_i)=x_i,   P(\lnot F)=1+P(F)$
\item $P(F_1\land
F_2)=P(F_1)\cdot P(F_2)$ 
\item $P(F_1 \lor F_2)=P(F_1)+P(F_2)+P(F_1)\cdot P(F_2)$
\item     $P(F_1 \to F_2)=1 + P(F_1)+P(F_1)\cdot P(F_2)$, and 
\item $P(F_1\liff
  F_2)=1+P(F_1)+P(F_2)$ 
  \end{itemize}

{For the reciprocal translation (from poynomials to formulas) we use the map $\Theta: \anillo \to Form({\mathcal L})$ defined by:}
   \begin{itemize}
\item $\Theta(0)=\bot ,\ \Theta(1)=\top$, $\Theta(x_i)=p_i,$ 
\item $\Theta(a\cdot b)=\Theta(a)\land \Theta(b)$, and $\Theta(a+b)=\lnot(\Theta(a)\liff \Theta(b))$.
   \end{itemize}
   It can be proved that $\Theta (P(F))\equiv F$ and  $P(\Theta (a))=a$. 
Sometimes, for the sake of readability, 
we will use the following property, that is a straightforward consequence of the previous assertions:
   $$\Theta (1+a+ab)\equiv \Theta(a) \to \Theta (b)$$

  \subsection{Correspondence between valuations and points in $\fdos^n$}

{The similar functional behavior of the formula $F$ and its polynomial translation $P(F)$  is the basis of the relationship between logical semantics and polynomial functions. Let's clarify what similar behavior means:}

\begin{itemize}
\item {\em From valuations to points}:
Given a valuation $v:\mathcal L \to \{0,1\}$,
the truth value of $F$ with respect to $v$ agrees with the value of $P(F)$ on
the  point of $o_v\in {\mathbb F}^n$ defined by the values provided by $v$: if 
 $(o_v)_i=v(p_i)$ then

$$v(F)=P(F)((o_v)_1,\dots (o_v)_n)$$ 

\item {\em From points to valuations}: Each $o=(o_1,\dots , o_n)\in \fdos^n$ induces a valuation $v_o$ defined by:
$$v_o(p_i)=1 \Liff o_i=1$$
\end{itemize}

This way
$$v_o \models F \Liff P(F)(o_v)+1=0 \Liff o_v \in V(1+P(F))$$
{where $V(.)$ is the well-known algebraic vanishing operator (see e.g. \cite{Cox}:} given $a({\bf x}) \in \anillo$, 
$$V(a({\bf x}))=\{o \in \fdos^n \ : \ a(o)=0\}$$

{Summarizing we provide two {maps} among the set of valuations and  points of $\fdos^n$, which are bijections between models of the formula $F$  and points from the algebraic variety determined by $1+P(F)$;}
$$\begin{array}{ccc}
\begin{array}{rcl} Mod(F) &\to & V(1+P(F))\\
v & \mapsto & o_v \\
\end{array}
   & \qquad &
   \begin{array}{rcl}
      V(1+P(F))&  \to&  Mod(F)\\
 o & \mapsto & v_o \\
 \end{array}
\end{array}
$$

For example, 
consider the formula $F= p_1\to p_2 \land p_3$. The associated polynomial is $P(F)=1+x_1+x_1x_2x_3$. The valuation $v=\{(p_1,0), (p_2,1), (p_3,0)\}$  is model of $F$ and induces the point $o_v=(0,1,0)\in \fdos^3$, which belongs to $V(1+P(F))=V(x_1+x_1x_2x_3)$.

\subsection{Polynomial projection}

 {\ttp Consider now the right-hand side diagram of  Fig. \ref{conmu}.} {To simplify the re\-la\-tion between the semantics of propositional logic and geometry over finite fields we use the map }
 $$\Phi : \anillo \to \anillo$$ 
$$\displaystyle \Phi(\sum_{\alpha \in I} {\bf x}^{\alpha}):=\sum_{\alpha
  \in I} {\bf x}^{sg(\alpha )}$$
 being $sg(\alpha ):=(\delta_1,\dots , \delta_n)$, where $\delta_i$ is $0$ if
$\alpha_i= 0$ and $1$ otherwise.

 The map $\Phi$ selects the representative element of the equivalence class of the polynomial in $\anillo/_{\idos}$ that is  a polynomial formula. So to associate a polynomial formula to a propositional formula $F$ it suffices to apply the composition  $\pi :=\Phi\circ P$, that we will call  {\bf polynomial projection}.  
For example, 
$$P(p_1 \to p_1 \land p_2)=1+x_1 + x_1^2x_2 \mbox{ whereas } \pi (p_1 \to p_1 \land p_2)=1+x_1 + x_1x_2$$

\subsection{Propositional Logic and polynomial ideals}

We recall here the well-known correspondence between algebraic sets and polynomial ideals on the coefficient field $\fdos$, and propositional logic KBs.

{Given a subset $X \subseteq (\mathbb{F}_2)^n$, we denote by 
$I(X)$  the set (actually an algebraic ideal) of polynomials of $\anillo$ vanishing on $X$:}

$$I(X)=\{a({\bf x})\in
\anillo \ : \ a(u)=0 \mbox{ for any }  u\in X \} $$

{Symmetrically}, given $J\subseteq \anillo$ it is possible to consider the previously mentioned algebraic set $V(J)$, the ``vanishing set":

$$V(J)=\{u \in  (\mathbb{F}_2)^n \ : \  a(u)=0 \mbox{ for any } a({\bf x})\in
J \}$$

Nullstellensatz theorem for $\fdos$ is  stated as follows (see e.g. \cite{janl}):
\begin{theorem} (Nullstellensatz theorem with the coefficient field $\fdos$)
\begin{itemize}
\item If $A\co {\mathbb F}_2^n$, then $V(I(A))=A$, and  
\item for every  $J\in Ideals({\mathbb F}_2[{\bf x}])$,
  $I(V({J}))=J+\idos$. 
\end{itemize}
\end{theorem}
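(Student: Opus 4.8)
The proof rests on \emph{indicator polynomials}. For each point $a=(a_1,\dots,a_n)\in\fdos^n$ I would set $\chi_a(\mathbf x):=\prod_{i=1}^n (x_i+a_i+1)$. Since each variable occurs in exactly one factor and to the first power, $\chi_a$ is a polynomial formula, and a direct evaluation over $\fdos$ shows $\chi_a(v)=1$ if $v=a$ and $\chi_a(v)=0$ otherwise, for every $v\in\fdos^n$. The $2^n$ polynomials $\{\chi_a\}_{a\in\fdos^n}$ are linearly independent over $\fdos$ (evaluating $\sum_a c_a\chi_a=0$ at $b$ forces $c_b=0$), while the space of polynomial formulas has dimension $2^n$, a basis being the squarefree monomials; hence $\{\chi_a\}$ is itself a basis of that space. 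Consequently every polynomial formula $m$ satisfies the interpolation identity $m=\sum_{a\in\fdos^n} m(a)\,\chi_a$. Applying this to $m=\Phi(f)$ for an arbitrary $f\in\anillo$, and recalling that $\Phi(f)$ is the polynomial-formula representative of $f$ modulo $\idos$ (so $f\equiv\Phi(f)\pmod{\idos}$ and the two take the same values on $\fdos^n$), this gadget reduces every evaluation question to a statement about the finitely many points of $\fdos^n$. A first consequence, the case $A=\fdos^n$, is $I(\fdos^n)=\idos$: a polynomial vanishing everywhere has zero polynomial-formula part and hence lies in $\idos$.

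For the first item I would prove the two inclusions of $V(I(A))=A$ separately. The inclusion $A\subseteq V(I(A))$ is immediate from the definitions, since every polynomial of $I(A)$ vanishes on $A$ by construction. For the reverse inclusion I would exhibit, for each $u\notin A$, a polynomial of $I(A)$ not vanishing at $u$: the polynomial $1+\sum_{a\in A}\chi_a$ vanishes exactly on $A$ (its value at $v$ is $1+1=0$ for $v\in A$ and $1$ otherwise), so it belongs to $I(A)$ yet takes the value $1$ at $u$, whence $u\notin V(I(A))$.

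For the second item, the easy inclusion $J+\idos\subseteq I(V(J))$ follows because $J\subseteq I(V(J))$ holds by definition of $V(J)$, while each generator $x_i+x_i^2=x_i(1+x_i)$ vanishes on all of $\fdos^n\supseteq V(J)$, so $\idos\subseteq I(V(J))$; as $I(V(J))$ is an ideal it contains their sum. The substantial inclusion is $I(V(J))\subseteq J+\idos$. Here I would take $f\in I(V(J))$ and set $m:=\Phi(f)$, a polynomial formula that vanishes on $V(J)$ since it agrees with $f$ on $\fdos^n$. By the interpolation identity $m=\sum_{a\notin V(J)} m(a)\,\chi_a$, the terms with $a\in V(J)$ dropping out because $m(a)=0$ there. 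It therefore suffices to show $\chi_a\in J+\idos$ for each $a\notin V(J)$. Fixing such an $a$, there is $g\in J$ with $g(a)=1$; then $g\,\chi_a\in J$, and on $\fdos^n$ it takes the value $g(v)\chi_a(v)$, equal to $1$ at $v=a$ and $0$ elsewhere, so $g\chi_a$ and $\chi_a$ agree as functions and $g\chi_a-\chi_a\in I(\fdos^n)=\idos$. Hence $\chi_a\in J+\idos$, summing gives $m\in J+\idos$, and finally $f=m+(f-\Phi(f))\in J+\idos$ since $f-\Phi(f)\in\idos$.

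The main obstacle is precisely this last inclusion, and within it the passage from ``$f$ vanishes on $V(J)$'' to genuine membership in $J+\idos$. The device that unlocks it is the multiplication step $g\chi_a\equiv\chi_a\pmod{\idos}$, which manufactures the point-indicator $\chi_a$ inside $J$ out of any witness $g\in J$ separating $a$ from $V(J)$; once this is available, everything else is bookkeeping with the interpolation identity and the identification $I(\fdos^n)=\idos$. It is worth noting that, unlike the classical Nullstellensatz, no radical appears: the correcting term $\idos$ already encodes the relations $x_i^2=x_i$ that make $\anillo/\idos$ behave like the ring of $\fdos$-valued functions on $\fdos^n$.
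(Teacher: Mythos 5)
Your proof is correct, and there is nothing in the paper to compare it against: the paper states this Nullstellensatz without proof, referring the reader to the literature (``see e.g.~[janl]''), so your argument stands as a self-contained justification of a result the paper only quotes. Your route is the standard one for finite fields and it is executed cleanly: the indicator polynomials $\chi_a(\mathbf{x})=\prod_{i=1}^n(x_i+a_i+1)$ form a basis of the $2^n$-dimensional space of polynomial formulas, giving the interpolation identity $m=\sum_a m(a)\chi_a$; this yields $I(\fdos^n)=\idos$, the separation argument for $V(I(A))=A$, and, via the key multiplication step $g\chi_a\equiv\chi_a \pmod{\idos}$ for a witness $g\in J$ with $g(a)=1$, the hard inclusion $I(V(J))\co J+\idos$. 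That last step is exactly what replaces the radical of the classical Nullstellensatz, as you note: since $\anillo/\idos$ is the ring of $\fdos$-valued functions on $\fdos^n$, adding $\idos$ already absorbs all functional coincidences. Two small remarks: in characteristic $2$ your differences are sums ($g\chi_a-\chi_a=g\chi_a+\chi_a$), which is harmless; and your appeal to $f\equiv\Phi(f)\pmod{\idos}$ is justified by the paper's own description of $\Phi$ as selecting the polynomial-formula representative modulo $\idos$, so no gap arises there.
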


From the Nullstellensatz theorem it follows that:
\smallskip
 \begin{center}
   $F\equiv F'$ if and only if $P(F)=P(F') \ (\mod \idos )$ 
 \end{center}
\smallskip
Therefore 
$F\equiv F'$ if and only if  $\pi (F) = \pi (F')$.

The following theorem summarizes the main relationship between propositional logic and $\anillo$:

\begin{theorem}  \label{gb} (see e.g.
\cite{jalonso}) Let $K=\{F_1,\dots , F_m\}$ and $G$ be a propositional formula.
The following conditions are equivalent:
\begin{enumerate}
\item $\{F_1,\dots , F_m\} \m G$.
\item $1+P(G) \in \langle 1+P(F_1),\dots , 1+P(F_m)\rangle + \idos$. 
\item $V\langle 1+P(F_1),\dots , 1+P(F_m)\rangle \subseteq V\langle 1+P(G)\rangle$
\end{enumerate}
\end{theorem}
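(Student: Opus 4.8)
The plan is to route all three equivalences through the dictionary between valuations and points set up in the previous subsections, reserving the Nullstellensatz for the single genuinely algebraic step. Write $J := \langle 1+P(F_1),\dots,1+P(F_m)\rangle$ and $h := 1+P(G)$, so that the three conditions read $K \m G$, $\ h \in J + \idos$, and $\ V(J) \co V\langle h\rangle$.

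First I would identify the models of $K$ with the points of $V(J)$. Since the vanishing set of a finitely generated ideal is the intersection of the vanishing sets of its generators, $V(J) = \bigcap_{i=1}^m V(1+P(F_i))$; and by the correspondence $v \m F_i \Liffn o_v \in V(1+P(F_i))$ established earlier, the bijection $v \mapsto o_v$ restricts to a bijection between $Mod(K)$ and $V(J)$. The very same correspondence identifies $Mod(G)$ with $V\langle h\rangle = V(1+P(G))$. With this dictionary in hand, the equivalence (1) $\Liffn$ (3) is immediate: $K \m G$ means $Mod(K) \co Mod(G)$, which transports under the bijection to exactly $V(J) \co V\langle h\rangle$.

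The core of the argument is the equivalence (2) $\Liffn$ (3), and this is where the Nullstellensatz enters. For (3) $\Lif$ (2), the inclusion $V(J) \co V\langle h\rangle$ says precisely that $h$ vanishes on every point of $V(J)$, i.e. $h \in I(V(J))$; the Nullstellensatz then yields $I(V(J)) = J + \idos$, so $h \in J + \idos$. For the converse (2) $\Lif$ (3), I would use that each generator $x_i + x_i^2 = x_i(1+x_i)$ of $\idos$ vanishes at every point of $\fdos^n$, hence so does every element of $\idos$; since every element of $J$ vanishes on $V(J)$ by the definition of the vanishing set, any $h \in J + \idos$ vanishes on $V(J)$, giving $V(J) \co V\langle h\rangle$.

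The only delicate point is keeping track of the role of $\idos$: the Nullstellensatz over $\fdos$ recovers an ideal only up to adding $\idos$, reflecting the fact that $V(\idos) = \fdos^n$ and therefore $V(J+\idos) = V(J)$. Once this is absorbed, both implications of (2) $\Liffn$ (3) become formal consequences of the two halves of the Nullstellensatz, and no actual computation is needed.
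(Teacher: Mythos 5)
Your proof is correct. Note that the paper itself gives no proof of this theorem (it is quoted from \cite{jalonso}), but your argument is exactly the one its setup is designed to support: the bijection $v \mapsto o_v$ between valuations and points of $\fdos^n$ identifies $Mod(K)$ with $V\langle 1+P(F_1),\dots,1+P(F_m)\rangle$ and $Mod(G)$ with $V\langle 1+P(G)\rangle$, giving $(1)\Leftrightarrow(3)$, while the two halves of the stated Nullstellensatz, together with the observation that every element of $\idos$ vanishes on all of $\fdos^n$, give $(2)\Leftrightarrow(3)$; no step is missing.
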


\begin{remark}{\ttp 
If  the use of Gröbner basis is considered, above conditions are equivalent to:

 $ 4. \ \ \displaystyle {\tt NF}(1+P(G), 
\langle 1+P(F_1),\dots ,1+P(F_m)\rangle+\idos)=0$

\noindent where $\tt {GB(I)}$ denotes the {\em Gr\"{o}bner basis} of ideal
$I$ and $\tt {NF(p,B)}$
denotes a {\em normal form} of polinomial $p$ respect to the Gröbner basis $B$. 
The complete description on Gröbner basis is {not within the} scope of this paper. A general reference for Gröbner Basis could be seen in \cite{Winkler}. 
Readers can find in \cite{jalonso2}  a quick tour on the use of Gröbner Basis in Propositional Logic.}
\end{remark}

 

 
Given $K$ be a KB, let us define the ideal 
$$J_K = ( \{1+P(F) \ : \ F\in K\})$$
Note that by Thm. \ref{gb} it is easy to see that
$$ v\models K \Liff o_v \in V(J_K)$$

\section{Conservative retractions by forgetting variables}

{In this section  we present how to calculate a conservative retraction by using {\em forgetting} operators. These operators are maps of type:

$$ \delta : Form(\mathcal L)\times Form(\mathcal L) \to Form(\mathcal L)$$
 


\begin{definition} Let be $\delta$ an operator: $ \delta : Form(\mathcal L)\times Form(\mathcal L) \to {Form(\mathcal L\setminus \{p\})}$. It is said that $\delta$ is
\begin{enumerate}
\item {\bf sound} if $\{F,G\}\models \delta (F,G)$, and 
\item a {\bf forgetting operator} for the variable $p\in \mathcal L$ if
$$ \delta (F,G) \equiv [\{F,G\}, \mathcal L\setminus \{p\}]$$
\end{enumerate}
\end{definition}

\begin{figure}[t]
\includegraphics[width=9cm]{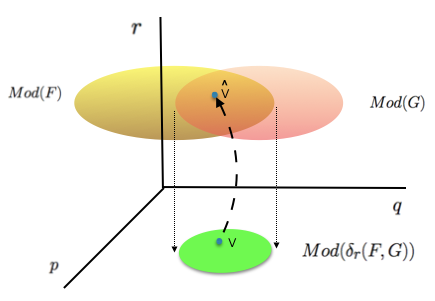}
\centering\caption{Semantic interpretation of a forgetting operator for the variable $r$ (Lifting Lemma) \label{indemod}}
\end{figure}

{An useful characterization of the operators can be deduced from the following semantic property:} {\ttp If $\delta$ is a forgetting operator, the models of $\delta (F,G)$ are precisely the {\em projections} of models of  $\{F,G\}$ (see Fig. \ref{indemod}).}

\begin{lemma} (Lifting Lemma) Let $v: \mathcal L \setminus \{p\} \to \{0,1\}$ be a valuation, $F,G\in Form (\mathcal L)$ and $\delta$ a forgetting operator for $p$. The following conditions are equivalent:
\begin{enumerate}
\item $v\models \delta (F,G)$
\item There exists a valuation $\hat{v}:\mathcal L \to \{0,1\}$ such that  $\hat{v}\models F\land G$ and 
$$\displaystyle \hat{v}\restriction_{\mathcal L \setminus \{p\}} = v$$
(that is, $\hat{v}$ extends $v$).
\end{enumerate}
\end{lemma}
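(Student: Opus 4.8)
The plan is to prove the two implications separately, relying on the defining property that $\delta$ is a forgetting operator, namely $\delta(F,G)\equiv[\{F,G\},\mathcal L\setminus\{p\}]$. Throughout I write $K=\{F,G\}$ and $\mathcal L'=\mathcal L\setminus\{p\}$, so that the models of $\delta(F,G)$ are exactly the $\mathcal L'$-valuations that satisfy every $\mathcal L'$-formula entailed by $K$. Note first that $\delta(F,G)$ is a $p$-free formula (by the type of $\delta$), so evaluating it against the valuation $v$ on $\mathcal L'$ is meaningful.

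The implication $(2)\Rightarrow(1)$ is the routine one. Given an extension $\hat v$ of $v$ with $\hat v\models F\land G$, I would take an arbitrary $H\in[K,\mathcal L']$; by definition $K\models H$, hence $\hat v\models H$, and since $H$ uses only symbols of $\mathcal L'$ its truth value under $\hat v$ coincides with its value under $v=\hat v\restriction_{\mathcal L'}$. Thus $v\models H$ for every such $H$, that is $v\models[K,\mathcal L']\equiv\delta(F,G)$.

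For $(1)\Rightarrow(2)$ I would argue by contraposition, and the key device is the characteristic (minterm) formula of $v$. Since $\mathcal L'$ is finite, write $\mathcal L'=\{q_1,\dots,q_m\}$ and set
$$C_v=\bigwedge_{v(q_i)=1}q_i\ \land\ \bigwedge_{v(q_i)=0}\lnot q_i,$$
an $\mathcal L'$-formula whose models over the full language $\mathcal L$ are precisely the valuations extending $v$, the value of $p$ being left free. Now suppose $v$ admits no extension to a model of $K$. Then no model of $C_v$ is a model of $K$, so $K\cup\{C_v\}$ is inconsistent, which is to say $K\models\lnot C_v$. But $\lnot C_v$ lies in $Form(\mathcal L')$, hence $\lnot C_v\in[K,\mathcal L']$, whereas $v\models C_v$ by construction gives $v\not\models\lnot C_v$. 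Therefore $v\not\models[K,\mathcal L']\equiv\delta(F,G)$, contradicting $(1)$. This shows that $(1)$ forces the existence of the desired extension $\hat v$.

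I expect the only genuinely substantive step to be $(1)\Rightarrow(2)$, and within it the crux is the observation that the obstruction to lifting $v$ can itself be expressed as an $\mathcal L'$-formula: if $v$ cannot be extended, then $\lnot C_v$ is an $\mathcal L'$-consequence of $K$ that $v$ violates, which is exactly what separates $v$ from the canonical retraction $[K,\mathcal L']$. The remaining care is bookkeeping about languages, namely checking that $C_v$ and $\lnot C_v$ are really $p$-free and hence members of $Form(\mathcal L')$, but this is immediate from the construction.
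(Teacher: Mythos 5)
Your proof is correct and follows essentially the same route as the paper's: your characteristic formula $C_v$ is exactly the paper's $H_v=\bigwedge_{q\in\mathcal L\setminus\{p\}}q^v$, and in both arguments the crux of $(1)\Rightarrow(2)$ is that non-extendability of $v$ forces $\{F,G\}\models\lnot C_v$, an $\mathcal L\setminus\{p\}$-consequence that $\delta(F,G)$ must then entail, contradicting $v\models\delta(F,G)\land C_v$. The only differences are cosmetic (you phrase the key step via inconsistency of $\{F,G\}\cup\{C_v\}$ rather than the paper's tautology-plus-modus-tollens, and you spell out $(2)\Rightarrow(1)$ formula by formula).
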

\begin{proof}{\ttp
$(1)\Lifn (2)$: Given a valuation $v$,  let us consider the formula 
$$\displaystyle H_v = \bigwedge_{q\in \mathcal L\setminus \{p\}} q^v$$
where $q^v$ is $q$ if $v(q)=1$ and $\lnot q$ in other case. 
It is clear that $v$ is the only valuation on $\mathcal L \setminus \{p\}$ which is model of $H_v$.

\smallskip

Suppose that there exists a model of $\delta (F,G)$, $v: \mathcal L \setminus \{p\} \to \{0,1\}$,   with no  extension to a model of $F\land G$.  In this case the formula
$$H_v\to \lnot (F\land G)$$
 is a tautology, in particular
$$\{F,G\} \models H_v \to \lnot (F\land G)$$
Since $\{F,G\} \models  F\land G$, by modus tollens $ \{F,G\} \models \lnot H_v$. So $\delta(F,G)\models \lnot H_v$ because $\delta$ is a conservative retraction. This fact   is a contradiction because $v\models \delta(F,G) \land H_v$.

\medskip

$(2)\Lifn (1)$: {Such an extension $\hat{v}$ verifies}
$$\hat{v}\models F\land G \models [\{F,G\}, \mathcal L\setminus \{p\}]\models \delta (F,G)$$

Since $\delta (F,G)\in Form(\mathcal L\setminus \{p\})$, the valuation $v=\hat{v}_{\mathcal L \setminus \{p \}}$ is  also a model of $\delta (F,G)$.}

  \end{proof}
  
 In particular the result is true for the canonical conservative retraction $[K,\mathcal L \setminus \{p\}]$, because  
  $$
  [ K , \mathcal L \setminus \{p \} ] \equiv \delta_p (\Land K, \Land K)
  $$
  (being $\Land K := \Land_{F\in K} F$)

  An interesting case appears when $\delta_p (F_1, F_2)\equiv \top$. In this case every partial valuation on $\mathcal L \setminus \{p\}$ is
  extendable to a model of $\{F_1,F_2 \}$.

The following characterization will be used later:

\begin{corollary}\label{cli} Let $ \delta : Form(\mathcal L)\times Form(\mathcal L) \to {Form(\mathcal L\setminus \{p\})}$ be a sound operator. The following conditions are equivalent:
\begin{enumerate}
\item $\delta$ is a forgetting operator for the variable $p$.
\item For any $F,G\in Form (\mathcal L)$ and  $v \models \delta (F,G)$ valuation on $\mathcal L \setminus \{p\}$,  there exists an extension of $v$ model of $\{F,G\}$.
\end{enumerate}
\end{corollary}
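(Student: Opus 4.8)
The plan is to establish the two implications separately, leaning on the Lifting Lemma for one direction and on a direct model-theoretic argument for the other.

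For $(1) \Rightarrow (2)$, I would simply invoke the Lifting Lemma. If $\delta$ is a forgetting operator for $p$, then for any valuation $v \models \delta(F,G)$ on $\mathcal L \setminus \{p\}$ the lemma guarantees an extension $\hat{v}$ of $v$ with $\hat{v} \models F \land G$, which is exactly condition (2). No further work is needed here, since this is precisely the $(1)\Rightarrow(2)$ content of the lemma specialized to the present setting.

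For the converse $(2) \Rightarrow (1)$, the goal is to prove $\delta(F,G) \equiv [\{F,G\}, \mathcal L \setminus \{p\}]$, which by the definition of equivalence between a formula and a KB amounts to two entailments. One of them, $[\{F,G\}, \mathcal L \setminus \{p\}] \models \delta(F,G)$, follows immediately from soundness: since $\{F,G\} \models \delta(F,G)$ and $\delta(F,G) \in Form(\mathcal L \setminus \{p\})$, the formula $\delta(F,G)$ is itself a member of the set $[\{F,G\}, \mathcal L \setminus \{p\}]$, hence trivially entailed by it. The reverse entailment $\delta(F,G) \models [\{F,G\}, \mathcal L \setminus \{p\}]$ is where condition (2) does the work. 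I would take any $v \models \delta(F,G)$ and use (2) to obtain an extension $\hat{v} \models \{F,G\}$; then for every $H \in Form(\mathcal L \setminus \{p\})$ with $\{F,G\} \models H$ we get $\hat{v} \models H$, and since $H$ does not mention $p$ while $v = \hat{v}\restriction_{\mathcal L \setminus \{p\}}$, the truth value is unaffected by the assignment to $p$, so $v \models H$. As $H$ ranges over the whole set, this yields $v \models [\{F,G\}, \mathcal L \setminus \{p\}]$, completing the entailment and therefore the equivalence.

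The main obstacle I anticipate is the reverse entailment in the converse direction: one must argue carefully that the entailment of an $\mathcal L \setminus \{p\}$-formula $H$ by $\{F,G\}$ transfers from the extended valuation $\hat{v}$ down to its restriction $v$. This hinges on the fact that the truth value of a formula depends only on the values assigned to the variables it actually contains, so discarding the value of $p$ is harmless. Making this restriction step explicit is the only delicate point; the remainder is a routine combination of soundness (for one entailment) and the extension property furnished by hypothesis (2).
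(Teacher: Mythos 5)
Your proof is correct and follows essentially the same path as the paper: the $(1)\Rightarrow(2)$ direction is the Lifting Lemma, and for $(2)\Rightarrow(1)$ both arguments use soundness to dispose of the entailment $[\{F,G\},\mathcal L\setminus\{p\}]\models\delta(F,G)$ and then use hypothesis (2) to lift a model of $\delta(F,G)$ to a model of $\{F,G\}$, transferring truth of $p$-free formulas between the valuation and its extension. The only difference is presentational—the paper phrases the second entailment as a proof by contradiction (assuming some $H$ with $v\models\delta(F,G)\land\lnot H$), whereas you argue it directly—which is a trivial reformulation of the same idea.
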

\begin{proof}

$(1) \Lifn (2)$: Is true by Lifting Lemma

$(2) \Lifn (1)$. Let $F,G$ be two  formulas.   {\ttp Since $\delta$ is sound, it suffices to see that  
$$\delta(F,G) \models  [\{F,G\}, \mathcal L\setminus \{p\}]$$}

{\ttp Suppose that is not true. In this case there exists $H\in Form(\mathcal L\setminus \{p \})$ such that $[\{F,G\}, \mathcal L\setminus \{p\}]\models H$ (so $\{F,G\}$ also entails $H$), but there exists a valuation $v$ satisfying  $v\models \delta(F,G)\land \lnot H$. 

By $(2)$ there exists $\hat v$ extension of $v$ which is model of $\{F,G\}$, so
$\{F,G\} \not\models H$, that is, a contradiction.}


 \end{proof}

 \begin{corollary}\label{basicos} If  $p\notin var(F)$, and $\delta_p$ is a forgetting operator for $p$, then 
 $$\delta_p(F,F) \equiv F  \qquad \mbox{and } \qquad  \delta_p(F,G) \equiv \{F , \delta_p(G,G)\}$$
 \end{corollary}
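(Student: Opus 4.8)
The plan is to reduce both equivalences to a comparison of models, exploiting the defining property $\delta_p(F,G)\equiv[\{F,G\},\mathcal L\setminus\{p\}]$ of a forgetting operator together with the Lifting Lemma. The decisive structural fact throughout is that, because $p\notin var(F)$, the truth value of $F$ under a valuation $\hat v$ on $\mathcal L$ depends only on $\hat v\restriction_{\mathcal L\setminus\{p\}}$; hence for any extension $\hat v$ of a valuation $v$ on $\mathcal L\setminus\{p\}$ we have $\hat v\models F$ iff $v\models F$.

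For the first identity I would argue that $F$ is already a conservative retraction of $\{F\}$ to $\mathcal L\setminus\{p\}$. Indeed, since $p\notin var(F)$ we have $F\in Form(\mathcal L\setminus\{p\})$, and trivially every $(\mathcal L\setminus\{p\})$-consequence of $\{F\}$ is a consequence of $\{F\}$ itself; thus $\{F\}$ is a conservative retraction of $\{F,F\}=\{F\}$ onto $\mathcal L\setminus\{p\}$. By the uniqueness of conservative retractions up to equivalence, $F\equiv[\{F,F\},\mathcal L\setminus\{p\}]\equiv\delta_p(F,F)$. Equivalently, one checks directly that a valuation $v$ on $\mathcal L\setminus\{p\}$ is a model of $[\{F\},\mathcal L\setminus\{p\}]$ iff $v\models F$, since $F$ itself belongs to this set and every member of the set is entailed by $F$.

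For the second identity I would compare models valuation by valuation. Fix $v:\mathcal L\setminus\{p\}\to\{0,1\}$. By the Lifting Lemma, $v\models\delta_p(F,G)$ iff some extension $\hat v$ of $v$ satisfies $F\land G$; and, applying the Lifting Lemma to the self-paired argument $\delta_p(G,G)$ (where the conjunction $G\land G$ collapses to $G$), $v\models\delta_p(G,G)$ iff some extension of $v$ satisfies $G$. Now if $v\models\delta_p(F,G)$, pick $\hat v\models F\land G$; then $\hat v\models G$ gives $v\models\delta_p(G,G)$, while $\hat v\models F$ together with $p\notin var(F)$ gives $v\models F$, so $v\models\{F,\delta_p(G,G)\}$. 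Conversely, if $v\models F$ and $v\models\delta_p(G,G)$, pick an extension $\hat v\models G$; since $v\models F$ and $p\notin var(F)$, the same extension satisfies $F$, hence $\hat v\models F\land G$, and the Lifting Lemma yields $v\models\delta_p(F,G)$. As both sides have the same models, they are equivalent.

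The only delicate points are bookkeeping ones: invoking the Lifting Lemma for the self-paired argument $\delta_p(G,G)$ correctly (so that the relevant extensions are models of $G$ rather than of a strict conjunction), and using the $p$-freeness of $F$ in both directions to pass between $\hat v\models F$ and $v\models F$. No genuinely hard step is expected, since the real content is already carried by the Lifting Lemma and the definition of forgetting operator.
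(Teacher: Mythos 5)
Your proof is correct and follows essentially the same route as the paper: the first identity via the observation that $F$ is its own conservative retraction onto $\mathcal L\setminus\{p\}$ when $p\notin var(F)$, and the second via the Lifting Lemma applied to $\delta_p(G,G)$ to produce an extension modelling $G$ which, by $p$-freeness, also satisfies $F$. The only cosmetic difference is that you establish the direction $\delta_p(F,G)\models\{F,\delta_p(G,G)\}$ semantically by a second appeal to the Lifting Lemma, whereas the paper reads it off directly from the entailment properties of the canonical retraction $[\{F,G\},\mathcal L\setminus\{p\}]$.
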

 \begin{proof} 
 If $p\notin var(F)$, then $\{F\} \equiv [ \{F\}, \mathcal L \setminus  \{p\}]\equiv \delta_p(F,F)$
 
 On the other hand, $\delta_p(F,G)\equiv [\{F,G\}, \mathcal L \setminus \{p\}] \models \{F , \delta_p(G,G)\}$. {To prove that actually it  is an equivalence, it will be shown that they have the same models.
  
  Let $v$ a valuation on $\mathcal L \setminus \{p\}$  such that $v \models \{F,  \delta_p(G,G)\}$.} {\ttp Then there exists $\hat{v}$, extension of $v$, such that $\hat{v}\models G$. Since $\hat{v}\models F$, then by Lifting lemma $v\models \delta (F,G)$.}

  \end{proof}

For forgetting operators as defined herein, the Lifting Lemma is a reformulation  of the observation made by J. Lang et al. \cite{lang} about variable forgetting. The authors  present a characterization of forgetting by means of Quantified Boolean Formulas (QBF), 
$\exists x \hat{F} (x)$, where $\hat{F}$ is the interpretation of $F$ as a Boolean formula whose free variables are the propositional variables of $F$. In our case,  $\delta_p(F,G)$ could correspond to the QBF formula $\exists  p (F\land G)$.

{Authors of the aforementioned article present a method of forgetting $X$ (a variable set of a formula $F$), denoted by $forget(F,X)$ by constructing disjunctions in the following way:}
\begin{itemize}
\item[ ] $forget(F,\emptyset)= F$
\item[ ] $forget(F,\{x\}) = F\{x/\top\}\lor F\{x/\bot\}$
\item[ ] $forget(F, \{x\}\cup Y )=forget(forget(F,Y),\{x\})$
\end{itemize}

Note that with this approach $forget(F, Y )$ can have high size. In our case we aim to simplify the representation by using algebraic operations on polynomial projections. 



\subsection{Conservative retractions induced by forgetting operators}
\vspace{0.1cm}
We denote by $2^X$ the power set of $X$. By analogy with the classical resolution-based saturation process (on CNF formulas), we will call {\em saturation} the process of applying the rule exhaustively until no new consequences are obtained and observing the result  (checking whether an inconsistency has been obtained) .
\begin{definition}
\mbox{ }
\begin{enumerate} 
\item Let $\delta_p$ be a forgetting operator for $p$. It is defined $\delta_p[\cdot] $ as 
$$\displaystyle\delta_p[\cdot] : 2^{Form(\mathcal L)} \to 2^{Form(\mathcal L)}$$
$$\delta_p[K]:=\{\delta_p(F,G) \ : \ F,G \in K\}$$
\item Suppose we have a forgetting operator $\delta_p$ for each $p\in \mathcal L$. We will call {\bf saturation} of $K$ to the process of applying the operators $\delta_p[\cdot]$ (in some order) by using all the propositional variables of $\mathcal L (K)$, denoting the result by $sat_{\delta}(K)$ (which will be a subset of $\{\bot, \top \}$).
\end{enumerate}
\end{definition}

We will bellow see that the set $sat_{\delta}(K)$ does not essentially depend of the order of applications of operators. Moreover, keep in mind that since the forgetting operators are sound, if $K$ is consistent then necessarily $sat_{\delta}(K)=\{\top\}$.

From forgetting operators a logical calculus can be defined in the usual way:
\begin{definition}\label{calculus} Let $K$ be a KB and $F\in Form({\mathcal L})$ and let $\{\delta_p \ : \ p \in \mathcal L (K) \}$ a family of forgetting operators.
\begin{itemize}
\item A $\te_{\delta}$-proof in  $K$ is a formula sequence $F_1,\dots F_n$ such that for every $i\leq n$  $F_i\in K$ or exist $F_j,F_k$ ($j,k<i$) such that $F_i=\delta_p(F_j,F_k)$ for some $p\in \mathcal L$.
\item $K \te_{\delta} F$ if there exists $\te_{\delta}$-proof in $K$, $F_1,\dots F_n$, with $F_n=F$
\item A $\te_{\delta}$-refutation is a $\te_{\delta}$-proof of $\bot$.  
\end{itemize}
\end{definition}

The (refutational) completeness of the calculus associated to forgetting operators is stated as follows.

\begin{figure}[t]
\centering\includegraphics[width=10cm]{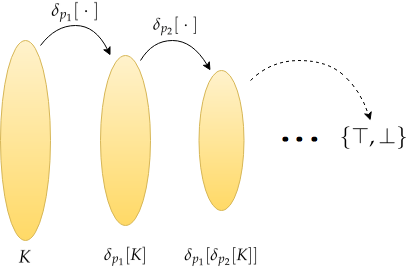}
\caption{Deciding consistency by using a set of forgetting operators $\partial$ \label{comple}}
\end{figure}

\begin{theorem}\label{completa}  Let $\{\delta_p \ : \ p \in \mathcal L \}$ a family of forgetting operators. Then $\te_{\delta}$ is refutationally complete:
$K$ is inconsistent if and only if $K \te_{\delta} \bot$. 
\end{theorem}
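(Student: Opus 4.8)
The plan is to prove the two implications separately, isolating all the real work in a single lemma about how the operator $\delta_p[\cdot]$ affects consistency.

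For the soundness direction I would assume $K \te_{\delta} \bot$ and fix a $\te_{\delta}$-proof $F_1,\dots,F_n=\bot$. Because every forgetting operator is sound --- indeed $\delta_p(F,G)\equiv[\{F,G\},\mathcal L\setminus\{p\}]$ already forces $\{F,G\}\models\delta_p(F,G)$, since the canonical retraction consists of consequences of $\{F,G\}$ --- an easy induction on $i$ gives $K\models F_i$ for all $i$. In particular $K\models\bot$, so $K$ is inconsistent.

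The converse rests on the claim that, for each variable $p$, \emph{$K$ is consistent if and only if $\delta_p[K]$ is consistent} (the latter read as a KB over $\mathcal L\setminus\{p\}$). The forward implication is routine: if $v\models K$ then $v\models F\land G$ for all $F,G\in K$, so by soundness $v\models\delta_p(F,G)$ for all such pairs, and since each $\delta_p(F,G)\in Form(\mathcal L\setminus\{p\})$ the restriction $v\restriction_{\mathcal L\setminus\{p\}}$ models $\delta_p[K]$. The hard part will be the reverse implication, and this is where I expect the main obstacle to lie, because $\delta_p[K]$ only records the effect of $\delta_p$ on \emph{pairs} of axioms, whereas a model of $K$ must satisfy all of $K$ at once with a single choice for $p$. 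Given $v\models\delta_p[K]$, the only two candidate extensions are $v_0=v\cup\{p\mapsto 0\}$ and $v_1=v\cup\{p\mapsto 1\}$, and I would argue by contradiction that one of them models $K$. If neither does, choose $F\in K$ with $v_0\not\models F$ and $G\in K$ with $v_1\not\models G$. Since $v\models\delta_p(F,G)$, the Lifting Lemma supplies an extension $\hat v$ of $v$ with $\hat v\models F\land G$; but $\hat v$ sets $p$ either to $0$, forcing $\hat v=v_0\not\models F$, or to $1$, forcing $\hat v=v_1\not\models G$ --- a contradiction either way. The decisive point is precisely that $p$ admits only two extensions while $\delta_p$ inspects axioms two at a time, so a single pair already detects the obstruction; this is the crux of completeness.

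Finally I would assemble the global statement by iterating the lemma. Fixing any order on the variables of $\mathcal L(K)$ and applying the operators $\delta_p[\cdot]$ successively, repeated use of the lemma yields that $K$ is consistent if and only if $sat_{\delta}(K)$ is consistent. Since $sat_{\delta}(K)\subseteq\{\bot,\top\}$, this set is inconsistent exactly when $\bot\in sat_{\delta}(K)$. Hence, if $K$ is inconsistent, then $\bot$ appears among the formulas generated during saturation, and reading off the successive applications of the operators $\delta_p$ that produced it gives a $\te_{\delta}$-proof of $\bot$, that is, $K\te_{\delta}\bot$. Note that order-independence of $sat_{\delta}(K)$, announced just before the definition of saturation, is not needed for this argument: any one saturation order does the job.
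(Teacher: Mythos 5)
Your proof is correct, and at the top level it follows the same route as the paper: saturate $K$ and connect consistency of $K$ to consistency of $sat_{\delta}(K)$ via the Lifting Lemma, with the soundness direction handled by induction along the proof sequence. The difference lies in the one step that carries the real weight. The paper disposes of the completeness direction with the single phrase that, when $sat_{\delta}(K)=\{\top\}$, ``by repeating the application of Lifting Lemma'' the empty valuation can be extended to a model of $K$; this glosses over precisely the difficulty you flag: the Lifting Lemma speaks about one pair $(F,G)$ at a time, and different pairs in $K$ could a priori demand incompatible values of $p$. Your lemma --- $K$ is consistent iff $\delta_p[K]$ is consistent --- together with the two-candidate-extension argument (if neither $v\cup\{p\mapsto 0\}$ nor $v\cup\{p\mapsto 1\}$ models all of $K$, then witnesses $F$ and $G$ of the two failures yield a single formula $\delta_p(F,G)\in\delta_p[K]$ whose lifting is impossible) is exactly the missing inductive step, and it is spelled out nowhere in the paper. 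Note also that you could not have shortcut this by citing $\delta_p[K]\equiv[K,\mathcal L\setminus\{p\}]$ (Corollary \ref{key}), since the paper derives that corollary \emph{from} Theorem \ref{completa}, not the other way around; your argument respects that logical order. So: same skeleton, but your decomposition into a one-variable consistency-preservation lemma plus iteration turns the paper's sketch into an actual proof, at the modest price of a longer write-up.
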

\begin{proof} The idea is to saturate the KB (Fig. \ref{comple}).  If $sat_{\delta}(K)=\{\top\}$, then, by repeating the application of Lifting Lemma, we can extend the empty valuation (which is model of $\{\top\}$) to a model of $K$

If $\bot \in sat_{\delta}(K)$ then $K$ is inconsistent, because $K\models sat_{\delta}(K)$ by soundness of forgetting operators.
The selection of a particular $\te_{\delta}$-refutation is straightforward, as in the proof of the refutational completeness of resolution calculus, for example.
\end{proof}

\begin{corollary}\label{key}
$\delta_p[K]\equiv [K,\mathcal L \setminus \{p\}]$
\end{corollary}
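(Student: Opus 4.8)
The plan is to prove the equivalence $\delta_p[K]\equiv [K,\mathcal L \setminus \{p\}]$ by establishing the two entailments separately, reducing the harder one to a model-extension argument powered by the Lifting Lemma.

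First I would dispatch the easy inclusion $[K,\mathcal L\setminus\{p\}]\models \delta_p[K]$. Take an arbitrary member $\delta_p(F,G)$ of $\delta_p[K]$, so $F,G\in K$. Since $\delta_p$ is sound we have $\{F,G\}\models \delta_p(F,G)$, and because $K\models F$ and $K\models G$ this upgrades to $K\models \delta_p(F,G)$. As $\delta_p(F,G)\in Form(\mathcal L\setminus\{p\})$, the definition of the canonical conservative retraction gives $\delta_p(F,G)\in [K,\mathcal L\setminus\{p\}]$, hence $[K,\mathcal L\setminus\{p\}]\models\delta_p(F,G)$. Since the member was arbitrary, this direction follows.

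The second, harder direction is $\delta_p[K]\models [K,\mathcal L\setminus\{p\}]$, which I would prove semantically. Let $v$ be a valuation on $\mathcal L\setminus\{p\}$ with $v\models\delta_p[K]$, and aim at $v\models[K,\mathcal L\setminus\{p\}]$. The key claim is that $v$ extends to a model of $K$. There are exactly two extensions of $v$ to $\mathcal L$, namely $\hat v_0$ (with $p\mapsto 0$) and $\hat v_1$ (with $p\mapsto 1$). Suppose, for contradiction, that neither models $K$, and pick witnesses $F_0,F_1\in K$ with $\hat v_0\not\models F_0$ and $\hat v_1\not\models F_1$. Since $\delta_p(F_0,F_1)\in\delta_p[K]$ we have $v\models\delta_p(F_0,F_1)$, so the Lifting Lemma yields an extension $\hat v$ of $v$ with $\hat v\models F_0\land F_1$; but $\hat v\in\{\hat v_0,\hat v_1\}$, and either choice contradicts the choice of $F_0$ or of $F_1$. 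Hence some extension $\hat v\models K$. Then for every $H\in[K,\mathcal L\setminus\{p\}]$ we have $K\models H$, so $\hat v\models H$; as $H$ does not contain $p$ and $\hat v$ restricts to $v$, we conclude $v\models H$, that is $v\models[K,\mathcal L\setminus\{p\}]$.

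The main obstacle is exactly this extension step: showing that membership of $v$ in $Mod(\delta_p(F,G))$ for \emph{all} pairs $F,G\in K$ simultaneously suffices to lift $v$ to a single model of the whole of $K$. What makes the pigeonhole-style contradiction work is the finiteness of the fibre over $v$ (only the two values of $p$), so I would stress that feeding the \emph{offending pair} $(F_0,F_1)$ into the Lifting Lemma is the right move; this mirrors the characterization already recorded in Corollary \ref{cli}. Everything else — soundness, the language restriction, and the definition of the canonical retraction — is routine bookkeeping.
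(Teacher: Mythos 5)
Your proof is correct, but it follows a genuinely different route from the paper's. The paper proves the nontrivial direction $\delta_p[K]\models[K,\mathcal L\setminus\{p\}]$ refutationally: assuming $\delta_p[K]\not\models F$ for some $F\in[K,\mathcal L\setminus\{p\}]$, it observes that $\delta_p[K]\cup\{\lnot F\}$ is consistent, uses Corollary \ref{basicos} (since $p\notin var(F)$) to obtain $\delta_p[K\cup\{\lnot F\}]\equiv\delta_p[K]\cup\{\lnot F\}$, and then appeals to saturation and the refutational completeness theorem (Theorem \ref{completa}) to conclude that $K\cup\{\lnot F\}$ is consistent, contradicting $K\models F$. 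You instead argue directly at the level of models: the pigeonhole over the two-element fibre $\{\hat v_0,\hat v_1\}$, with the offending pair $(F_0,F_1)$ fed into the Lifting Lemma, shows that every model of $\delta_p[K]$ lifts to a model of $K$, from which the entailment is immediate. Your route is more self-contained --- it needs only soundness and the Lifting Lemma, not Theorem \ref{completa} nor Corollary \ref{basicos} --- and it makes explicit exactly why \emph{pairs} of formulas suffice in the definition of $\delta_p[\cdot]$: the fibre over $v$ has only two points, so a joint witness for two offending formulas already yields a contradiction. Indeed, your lifting-to-$K$ step is essentially the argument implicitly needed (but not spelled out) in the paper's proof of Theorem \ref{completa}, where the empty valuation is extended stepwise to a model of $K$; your write-up supplies that missing detail. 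What the paper's proof buys in exchange is economy and uniformity: it recycles the calculus-level machinery already established, so the corollary falls out as a short consequence of completeness rather than requiring a fresh semantic argument.
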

\begin{proof}
By soundness of  the forgetting operator $\delta_p$,
$$[K,\mathcal L \setminus \{p\}] \models \delta_p[K]$$
\noindent holds.
To prove  the other direction, let $F\in [K,\mathcal L \setminus \{p\}]$, and let us suppose that   $\delta_p[K] \not\models F$. Then $\delta_p[K] + \{\lnot F\}$ is consistent. In particular, if we saturate, $sat_{\delta}(\delta_p[K] \cup \{\lnot F\})=\{\top \}$. 

Since $p\notin var(\lnot F)$, by Lemma \ref{basicos} it holds  that for any $G\in K$:
$$\delta_p(\lnot F,G) \equiv \{\lnot F,  \delta_p(G,G) \} \qquad \mbox{and } \qquad \delta_p(\lnot F, \lnot F) \equiv \lnot F$$
Therefore
$$\delta_p[K\cup \{\lnot F\}] \equiv \delta_p[K] \cup \{\lnot F\}$$
so, by applying saturation, starting with  $p$
$$sat_{\delta}(K\cup \{\lnot F\})\equiv sat_{\delta}(\delta_p[K] \cup \{\lnot F\})=\{\top\}$$
 what indicates that $K\cup \{\lnot F\}$ is consistent, thereupon $K\not\models F$, a contradiction.

\end{proof}

 Given
 $Q\co \mathcal L$ and a  linear order $q_1< \cdots < q_k$ on $Q$, we define
the operator
$$\delta_{Q,<}:= \delta_{q_1}\circ \cdots \circ \delta_{q_k}$$
In fact, if we dispense with making an order explicit, the operator is well-defined module logical equivalence, that is, any two orders on $Q$ produce equivalent KBs. This is true because  for each $p,q\in \mathcal L$, using the previous corollary it follows that 
 $$\delta_p \circ \delta_q [K ]\equiv \delta_q \circ
\delta_p [K ]$$ 
\noindent due to the fact that both KBs are equivalent to $[K, \mathcal L \setminus \{p,q\}]$. Therefore, for the sake of simplicity, we will write $\delta_Q[K]$ when syntactic presentation of  this KB does not matter.

A consequence of corollary \ref{key} and theorem \ref{completa} is that entailment problem can be reduced to a similar problem but that it only uses
variables of the goal formula. This property is called the {\em location
property}:  the entailment problem can be simplified by eliminating propositional variables that do not appear in
the target formula.

\begin{corollary}\label{entail} (Location Property, \cite{calculemus}) The following conditions are e\-qui\-va\-lent:
  \begin{enumerate}
\item $K \m F$
\item  $\delta_{\mathcal L \setminus var(F)}[K ]\m F$
\end{enumerate}
\end{corollary}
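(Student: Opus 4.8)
The plan is to reduce everything to the single-variable result already in hand (Corollary \ref{key}) together with the order-independence of the operators $\delta_p$ noted just above the statement. Writing $Q := \mathcal L \setminus var(F)$, the heart of the matter is the identity
$$\delta_{Q}[K] \equiv [K, \mathcal L \setminus Q] = [K, var(F)],$$
that is, forgetting all variables outside $var(F)$ computes precisely the canonical conservative retraction of $K$ to the language $var(F)$. Once this is secured the corollary follows quickly.

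First I would prove the identity by induction on $|Q|$. The case $|Q| = 1$ is exactly Corollary \ref{key}. For the inductive step, write $Q = \{q\} \cup Q'$ with $q \notin Q'$; using the order-independence we may compute $\delta_{Q}[K] \equiv \delta_{q}[\delta_{Q'}[K]]$. By the induction hypothesis $\delta_{Q'}[K] \equiv [K, \mathcal L \setminus Q']$, and applying Corollary \ref{key} once more gives $\delta_{Q}[K] \equiv [[K, \mathcal L \setminus Q'], (\mathcal L \setminus Q') \setminus \{q\}]$. It then remains to establish the nested-retraction fact that for languages $\mathcal L'' \subseteq \mathcal L' \subseteq \mathcal L(K)$ one has $[[K, \mathcal L'], \mathcal L''] \equiv [K, \mathcal L'']$; taking $\mathcal L' = \mathcal L \setminus Q'$ and $\mathcal L'' = \mathcal L \setminus Q$ closes the induction.

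This nested-retraction fact is where the only genuine bookkeeping lives, and I would settle it directly from the definition of the canonical retraction as a set of formulas. Any $H \in [[K, \mathcal L'], \mathcal L'']$ is an $\mathcal L''$-formula entailed by $[K, \mathcal L']$; since every member of $[K, \mathcal L']$ is a consequence of $K$, so is $H$, whence $H \in [K, \mathcal L'']$. Conversely, if $H \in [K, \mathcal L'']$ then $H$ is an $\mathcal L''$-formula entailed by $K$; as $\mathcal L'' \subseteq \mathcal L'$ the formula $H$ is also an $\mathcal L'$-formula, so $H \in [K, \mathcal L']$, hence $[K, \mathcal L'] \models H$ and $H \in [[K, \mathcal L'], \mathcal L'']$. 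Thus the two canonical retractions contain the same formulas and are equivalent.

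With $\delta_Q[K] \equiv [K, var(F)]$ in place, both implications are short. For $(1) \Rightarrow (2)$: if $K \models F$ then, since $F \in Form(var(F))$, the formula $F$ belongs to $[K, var(F)]$ by definition, so $[K, var(F)] \models F$ and therefore $\delta_Q[K] \models F$. For $(2) \Rightarrow (1)$: soundness of each forgetting operator yields $K \models \delta_p[K]$, and iterating along the order defining $\delta_Q$ gives $K \models \delta_Q[K]$ by transitivity of entailment; combining with $\delta_Q[K] \models F$ gives $K \models F$. I expect the induction together with the nested-retraction step to be the main (though modest) obstacle, the two final implications being essentially definitional.
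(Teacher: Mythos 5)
Your proposal is correct and follows essentially the same route as the paper: the paper's proof is a one-line appeal to the identity $\delta_{\mathcal L \setminus var(F)}[K] \equiv [K, var(F)]$, which it treats as immediate from Corollary \ref{key} and the order-independence discussion preceding the statement. You simply make explicit what the paper leaves implicit (the induction on the forgotten set, the nested-retraction fact $[[K,\mathcal L'],\mathcal L'']\equiv[K,\mathcal L'']$, and the two final entailment directions), all of which are carried out correctly.
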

\begin{proof}
It is trivial, because $\delta_{\mathcal L \setminus var(F)} [K]  \equiv [ K, var(F)]$.
 \end{proof}

\section{Boolean derivatives and independence rule on polynomials}


 In order to define our forgetting operator we will make use of derivations on polynomials, by translating the  usual derivation  on $\anillo$ to
an operator  on propositional formulas. We review here some basic
  properties.
Recall that a derivation on a
  ring $R$ is  a
  map
$d: R \to R$ verifying 
$$d(a+b)=d(a)+d(b) \mbox{ and } d(a\cdot b)= d(a)\cdot b + a\cdot d(b) \qquad \mbox{for any }a,b\in R$$

The logical translation of derivations is builded as follows:
\begin{definition}\cite{calculemus}
A map   $\partial: Form({\mathcal L}) \to Form({\mathcal L})$ 
is a {\em Boolean derivative}  if there 
 exists a derivation $d$ on ${\mathbb F}_2[{\bf x}]$
such that
 the following diagram is commutative:
 \begin{center}
   $
\begin{array}{ccccc}
      Form({\mathcal L}  )          &       & \stackrel{\partial}{\to}        &  & Form({\mathcal L})       \\
   \\
  \pi \downarrow  &             &     \#         & &  \uparrow  \Theta \\
  \\
     {\mathbb F}_2[{\bf x}] &   & \stackrel{d}{\to}       &  & 
{\mathbb F}_2[{\bf x}]  \\ 
                             \end{array}
$
 \end{center}
That is, 
$\partial = \Theta \circ d \circ \pi$
\end{definition}

In this paper we are particularly interested in the Boolean derivative, denoted by 
$\frac{\partial}{\partial p}$, induced by the derivation  $d=\frac{\partial
}{\partial x_p}$. The following result shows a semantic equivalent expression of this derivative.

\begin{proposition}\label{derivada}
$\frac{\partial}{\partial p}F\equiv \lnot (F\{p/\lnot p\}\liff F)$
\end{proposition}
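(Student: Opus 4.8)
The plan is to reduce the claimed logical equivalence to an equality of polynomial projections, exploiting the consequence of the Nullstellensatz recalled above, namely that $F \equiv F'$ holds if and only if $\pi(F) = \pi(F')$. Thus it suffices to verify the single equality $\pi\!\left(\frac{\partial}{\partial p}F\right) = \pi\!\left(\lnot (F\{p/\lnot p\}\liff F)\right)$ of polynomial formulas. First I would compute the projection of the right-hand side using only the translation rules: from $P(\lnot X)=1+P(X)$ and $P(X\liff Y)=1+P(X)+P(Y)$, and working in characteristic two, one obtains $P(\lnot(X\liff Y))=P(X)+P(Y)$, whence $\pi(\lnot(F\{p/\lnot p\}\liff F)) = \pi(F\{p/\lnot p\}) + \pi(F)$.

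The next ingredient is a \emph{substitution lemma} relating the syntactic replacement $p/\lnot p$ to the algebraic substitution $x_p\mapsto 1+x_p$, namely $\pi(F\{p/\lnot p\}) = \pi(F)\restriction_{x_p\mapsto 1+x_p}$ $(\mod \idos)$; this follows by structural induction on $F$, using $\pi(\lnot p)=1+x_p$ together with the homomorphism-like behaviour of $P$ under the connectives, and noting that substituting the linear term $1+x_p$ keeps a polynomial formula a polynomial formula. Writing $a:=\pi(F)$, which has degree at most one in each variable, I would then invoke the core algebraic identity for such polynomials: decomposing $a = x_p\,b + c$ with $b,c$ free of $x_p$, one has on one hand $\frac{\partial a}{\partial x_p}=b$, and on the other hand $a\restriction_{x_p\mapsto 1+x_p} + a = (x_p b + b + c) + (x_p b + c) = b$ over $\fdos$. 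Combining this with the previous two steps gives $\pi(\lnot(F\{p/\lnot p\}\liff F)) = a\restriction_{x_p\mapsto 1+x_p} + a = \frac{\partial}{\partial x_p}\pi(F) = d(\pi(F))$.

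Finally, since $b=d(\pi(F))$ is again a polynomial formula, $\Phi$ fixes it, so using $P\circ\Theta=\mathrm{id}$ one computes $\pi\!\left(\frac{\partial}{\partial p}F\right) = \Phi\bigl(P(\Theta(d(\pi(F))))\bigr) = \Phi(d(\pi(F))) = d(\pi(F)) = b$, which matches the projection of the right-hand side and closes the argument. The step I expect to be the main obstacle is the careful handling of everything modulo $\idos$: the decomposition $a=x_p b+c$, the substitution lemma, and especially the identity $a\restriction_{x_p\mapsto 1+x_p}+a=\partial a/\partial x_p$ all genuinely require that we work with the projected representative (a polynomial formula, linear in each variable), since the last identity fails for arbitrary polynomials over $\fdos$ and only the shape guaranteed by $\pi$ makes it hold.
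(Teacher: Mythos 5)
Your proof is correct and takes essentially the same route as the paper's: both rest on the substitution identity $\pi(F\{p/\lnot p\}) = \pi(F)(x_1,\dots,x_p+1,\dots,x_n)$ and on the identity $\frac{\partial}{\partial x_p}a = a(x_p+1)+a(x_p)$ for polynomial formulas, which the paper asserts as ``easy to see'' and you verify via the decomposition $a = x_p\,b + c$. The only cosmetic difference is that you compare the $\pi$-projections of the two sides via $F\equiv F' \Longleftrightarrow \pi(F)=\pi(F')$, whereas the paper applies $\Theta$ to the polynomial identity and concludes directly.
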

 \begin{proof} It is straightforward to see that
$$\pi (F\{p/\lnot p\})({\bf x})= \pi (F)(x_1,\dots ,
x_p+1 , \dots, x_n)$$
On the other hand it is easy to see that
$$\frac{\partial}{\partial x}a(x)=a(x+1)+a(x)$$ holds for polynomial formulas, hence
$$\frac{\partial}{\partial x_p}\pi(F) = \pi (F)(x_1,\dots ,
x_p+1 , \dots, x_n) + \pi(F) (x_1,\dots ,
x_p, \dots, x_n)$$
Therefore, by applying $\Theta$ we conclude that
$$\frac{\partial}{\partial p} F = \Theta (\frac{\partial}{\partial x_p}\pi(F)) \equiv \lnot (F\{p/\lnot p\}\liff F)
$$

 \end{proof}

 {Notice that truth value of $\frac{\partial}{\partial p} F$ with
respect to a valuation does not depend of the truth value on the own  $p$; hence, we
can apply 
 valuations on $\mathcal L \setminus \{p \}$  to this formula. In fact, we can describe the structure of $F$ by isolating the role of $p$ as follows: }
 
 \begin{lemma}\cite{calculemus}\label{desc} ($p$-normal form).
  Let $F\in Form({\mathcal L})$ and $p$ be a propositional variable. There exists
 $F_0\in Form({\mathcal L\setminus\{p\}})$ such that
 
$$F\equiv \lnot (F_0 \liff p\land \frac{\partial}{\partial p}F)$$
\end{lemma}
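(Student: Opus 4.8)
The plan is to move everything to the polynomial side, exploit that the projection $\pi(F)$ is affine-linear in $x_p$, and then read the claimed equivalence back off a single polynomial identity. Write $f:=\pi(F)$. Since $f$ is a polynomial formula we have $\degree_p(f)\le 1$, so $f$ splits uniquely as $f = x_p\cdot g + h$, where $g,h\in\anillo$ do not involve $x_p$; concretely $g=\frac{\partial}{\partial x_p}f$ and $h=f-x_p g$. Both $g$ and $h$ are again polynomial formulas (differentiating with respect to $x_p$ only deletes the factor $x_p$ from the monomials that contain it, and removing those monomials from $f$ leaves $h$), so in particular $\Phi$ fixes each of them.

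Next I would identify the two pieces with logical objects. By definition of the Boolean derivative, $\frac{\partial}{\partial p}F=\Theta\bigl(\frac{\partial}{\partial x_p}\pi(F)\bigr)=\Theta(g)$, whence $P(\frac{\partial}{\partial p}F)=g$ and, because $g$ is a polynomial formula, also $\pi(\frac{\partial}{\partial p}F)=g$. I then set $F_0:=\Theta(h)$, which lies in $Form(\mathcal L\setminus\{p\})$ as required and satisfies $\pi(F_0)=h$.

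The core computation is to project the right-hand side. Using the translation rules $P(\lnot A)=1+P(A)$, $P(A\liff B)=1+P(A)+P(B)$ and $P(A\land B)=P(A)\cdot P(B)$, together with $P(p)=x_p$ and the identifications above, one gets $P\bigl(\lnot(F_0\liff p\land\frac{\partial}{\partial p}F)\bigr)=h+x_p g$. Since $h+x_p g=f$ is already a polynomial formula, applying $\Phi$ changes nothing, so $\pi\bigl(\lnot(F_0\liff p\land\frac{\partial}{\partial p}F)\bigr)=f=\pi(F)$. By the criterion $A\equiv B$ iff $\pi(A)=\pi(B)$ (a consequence of the Nullstellensatz theorem) this yields the desired equivalence.

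The only genuinely delicate point is the first step: that a polynomial formula is affine-linear in $x_p$, so that the decomposition $f=x_p g+h$ with $x_p$-free $g,h$ exists, and that the resulting $g$ coincides with the polynomial underlying $\frac{\partial}{\partial p}F$. Everything afterwards is bookkeeping with $P$, $\Phi$ and the relation $1+1=0$ modulo $\idos$. As a sanity check one may verify the statement semantically: for $v(p)=0$ both sides reduce to $\Theta(h)\equiv F\{p/\bot\}$, while for $v(p)=1$ the exclusive-or structure of $\lnot(\cdot\liff\cdot)$ cancels the $F\{p/\bot\}$ contributions and returns $F\{p/\top\}$, in agreement with Proposition \ref{derivada}.
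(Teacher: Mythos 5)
Your proof is correct and follows essentially the same route as the paper: both rest on the decomposition $\pi(F)=h+x_p g$ with $x_p$-free parts, the identification $g=\frac{\partial}{\partial x_p}\pi(F)$ so that $\Theta(g)=\frac{\partial}{\partial p}F$, and the choice $F_0=\Theta(h)$. The only cosmetic difference is directional: the paper reads the equivalence off by applying $\Theta$ to the decomposition, while you apply $P$ to the right-hand formula and invoke the criterion $A\equiv B$ iff $\pi(A)=\pi(B)$; these are interchangeable bookkeeping steps.
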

\begin{proof} Since $\pi(F)$ is a polynomial formula, we can suppose that
 $$\pi (F) = a + x_p b  \mbox{ with } deg_{x_p}(a)=deg_{x_p}(b)=0$$
 Therefore
 $$ F \equiv \Theta (\pi (F)) \equiv \lnot (\theta (a) \liff p \land \Theta (b)) $$
Then let $F_0 = \Theta (a)$, and, since $b= \frac{\partial}{\partial x_p}\pi(F)$,  we have that $\Theta (b) = \frac{\partial}{\partial p} F$.
 \end{proof}

For example, let $F= p\land q \to r$. Then $$\pi (F)= 1+x_px_q+x_px_qx_r=1+x_p(x_q+x_qx_r)$$
 Following the above proof, $a=1$ and $b=x_q+x_qx_r$ (note that $\frac{\partial}{\partial x_p} (\pi(F))=x_q+x_qx_r$). Therefore $\Theta(a)=\top$ and $\Theta(b)=q\land \lnot r$, so
$$ F\equiv \lnot (\top \liff p \land (q\land \lnot r))$$

\begin{figure}[t]
\includegraphics[width=10cm]{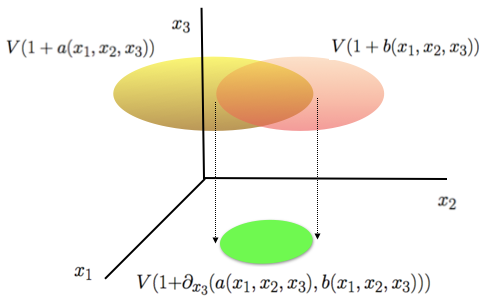}
\centering\caption{Geometric  interpretation of independence rule \label{indegeo}}
\end{figure}

 {The forgetting operator that we are going to define next, called independence rule, aims to represent the models of the conservative retraction as those that can be extended to models of $ F \land G $ (that is, the idea behind Lifting Lemma). Geometrically, if $a$ and $b$ are the polynomials $\pi(F)$ and $\pi (G)$ respectively, then the vanishing set $V(1+a, 1+b)$ (which could correspond to the set of  models both of $F$ and $G$) is projected by $\partial_{p}$ (see Fig. \ref{indegeo}). The algebraic expression of the projection is described as a rule.
}

\begin{definition} 
    The {\em independence rule} (or 
 $\partial$-rule) on polynomial formulas  is defined as follows: given $a_1,a_2\in \anillo$ and $x$ an indeterminate
$$\displaystyle\hspace*{.5cm}
 \quad \frac{a_1, \ a_2  }{\partial_x(a_1,a_2)}
 $$
 where $\partial_x(a_1,a_2)={1+\Phi\left[(1+a_1\cdot a_2)(1+a_1\cdot\frac{\partial}{\partial x}a_2
    +a_2\cdot \frac{\partial }{\partial x}a_1
+ \frac{\partial }{\partial x}a_1 \cdot \frac{\partial }{\partial
  x}a_2)\right]} $
\end{definition}
If $a_i=b_i+x_p \cdot c_i, \ \mbox{with }
\degree_{x_p}(b_i)=\degree_{x_p}(c_i)=0 \ (i=1,2)$, 
the rule we can rewritten as: 
$$ \partial_{x_p}(a_1,a_2)={\Phi\left[1+(1+b_1\cdot b_2)[1+(b_1+c_1)(b_2+c_2)]\right]}$$
\hspace{0cm} 

 For example, to compute 
 $$a=\partial_{x_2}(1+x_2x_3x_5+x_3x_5, 1+x_1x_2x_3x_4x_5 + x_1x_2x_3x_5)$$
we take
$$b_1= 1+ x_3x_5, \ c_1 = x_3x_5 \qquad \mbox{and } \qquad b_2 =1, \ c_2 = (1+x_4)x_1x_3x_5$$
so  the result is $a=1+x_1x_3x_4x_5 + x_1x_3x_5.$

 Note that independence rule is symmetric.

\section{Independence rule and non-clausal theorem proving}

The {\bf independence rule} for formulas is defined as
\begin{center}
  $\partial_{p}(F,G):= \Theta(\partial_{x_p}(\pi (F),\pi(G)))$
\end{center}
Following with above example,

\begin{table}[h]
\centering
\begin{tabular}{l}
$\partial_{p_2}(p_3\land p_5 \lif p_2 , p_1 \land p_2 \land p_3 \land p_5 \lif p_4) =$ \\\\
\qquad $= \Theta(\partial_{x_2}(1+x_2x_3x_5+x_3x_5, 1+x_1x_2x_3x_4x_5 + x_1x_2x_3x_5)) = $ \\\\
\qquad $= \Theta(1+x_1x_3x_4x_5 + x_1x_3x_5) = \lnot (p_1\land p_3\land p_4\land p_5 \liff p_1\land p_3 \land p_5) \equiv$ \\\\
\qquad $\equiv p_1\land p_3 \land p_5 \to p_4$
\end{tabular}
\end{table}

%
%

 It is worthy to {point out} some interesting features of the rule $\partial_p$: if $\partial_p(F,G)$ is a tautology, then $\partial_p(F,G)=\top$, and if $\partial_p(F,G)$ is inconsistent then $\partial_p(F,G)= \bot$. Both features are consequence of the translation to polynomials: polynomial formulas corresponds of tautologies and inconsistencies  are algebraically simplified to $1$ and $0$ in $\anillo/\idos$, respectively. In fact, we will usually work with the polynomial projections to exploit these features. 



\begin{proposition}
$\partial_{p}$ is sound
\end{proposition}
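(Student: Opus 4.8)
The plan is to unwind the definition of soundness and reduce it to a pointwise check on the polynomial expression. By definition the claim is that $\{F,G\}\models \partial_p(F,G)$, and since $\partial_p(F,G)=\Theta(\partial_{x_p}(\pi(F),\pi(G)))$ lies in $Form(\mathcal L\setminus\{p\})$, it suffices to show that every valuation $v$ with $v\models F\land G$ also satisfies $\partial_p(F,G)$. First I would record the evaluation dictionary already established in Section 2: for any formula $H$ one has $\pi(H)(o_v)=P(H)(o_v)=v(H)$, and for any polynomial $a$ the valuation $v$ is a model of $\Theta(a)$ exactly when $a(o_v)=1$ (using $P\circ\Theta=\mathrm{id}$). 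Moreover $\Phi$ does not change the value of a polynomial at a point of $\fdos^n$, because $o_i^k=o_i$ for $o_i\in\{0,1\}$. Hence the whole statement collapses to: if $\pi(F)(o_v)=\pi(G)(o_v)=1$, then the polynomial $\partial_{x_p}(\pi(F),\pi(G))$ evaluates to $1$ at $o_v$.

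Next I would invoke the rewritten form of the rule. Writing $a_i=\pi(F_i)=b_i+x_p\,c_i$ with $\degree_{x_p}(b_i)=\degree_{x_p}(c_i)=0$ (so that $c_i=\frac{\partial}{\partial x_p}a_i$), the rule reads
$$\partial_{x_p}(a_1,a_2)=\Phi\!\left[1+(1+b_1 b_2)\bigl(1+(b_1+c_1)(b_2+c_2)\bigr)\right].$$
The decisive observation is that $b_i$ is the value of $a_i$ at $x_p=0$ while $b_i+c_i$ is its value at $x_p=1$; concretely $a_i(o_v)=b_i(o_v)$ when $v(p)=0$ and $a_i(o_v)=(b_i+c_i)(o_v)$ when $v(p)=1$, since $a_i(o_v)=b_i(o_v)+v(p)\,c_i(o_v)$.

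The computation then splits into two cases on $v(p)$. If $v(p)=0$, the hypothesis $\pi(F_i)(o_v)=1$ gives $b_1(o_v)=b_2(o_v)=1$, so $(1+b_1 b_2)(o_v)=0$ and the product inside the bracket vanishes. If $v(p)=1$, the same hypothesis yields $(b_1+c_1)(o_v)=(b_2+c_2)(o_v)=1$, so the second factor vanishes. In either case the bracketed product evaluates to $0$ at $o_v$, hence $1+(\cdots)$ evaluates to $1$; since $\Phi$ preserves values at $\fdos$-points, $\partial_{x_p}(\pi(F),\pi(G))(o_v)=1$, i.e. $v\models\partial_p(F,G)$, as required.

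I do not expect a genuine obstacle here: the content is entirely the bookkeeping that identifies $b_i$ and $b_i+c_i$ as the two $x_p$-slices, together with the remark that, in $\fdos$, $1+(1+\alpha_0)(1+\alpha_1)=\alpha_0+\alpha_1+\alpha_0\alpha_1$ is the disjunction of the two slices $\alpha_0=(\pi(F)\cdot\pi(G))|_{x_p=0}$ and $\alpha_1=(\pi(F)\cdot\pi(G))|_{x_p=1}$. This exhibits the rule as the algebraic incarnation of $forget(F\land G,\{p\})$, so that soundness is in essence the trivial fact that a model of $F\land G$ satisfies one of these two disjuncts according to the value it assigns to $p$. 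The only points demanding care are the compatibility of $\Phi$ with evaluation at $\fdos$-points and the substitution identity $a_i(o_v)=b_i(o_v)+v(p)\,c_i(o_v)$, both of which are immediate from the definitions of Section 2.
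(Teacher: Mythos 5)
Your proof is correct and follows essentially the same route as the paper's: both write $\pi(F_i)=b_i+x_p c_i$, split into cases according to the value assigned to $p$, and observe that in each case one of the two factors $(1+b_1b_2)$ or $\bigl(1+(b_1+c_1)(b_2+c_2)\bigr)$ vanishes at the corresponding point, so the rule's polynomial evaluates to $1$. The only cosmetic difference is that the paper phrases the reduction through the variety inclusion of Theorem 2.3(3), $V(1+\pi(F_1)\pi(F_2))\subseteq V(1+\partial_{x_p}(\pi(F_1),\pi(F_2)))$, whereas you use the valuation--point dictionary directly (and you make explicit the fact, implicit in the paper, that $\Phi$ preserves values at points of $\fdos^n$).
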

\begin{proof}
{We have to prove $F_1\land F_2\models \partial_p (F_1,F_2)$. Suppose that 
$$\pi(F_1) =b_1+x_p \cdot c_1, \quad \pi(F_2) =b_2+x_p \cdot c_2$$  

According to Thm. \ref{gb}.(3),  it is enough to prove that
$$V(1+ \pi(F_1)\cdot \pi(F_2)) \subseteq V(1+\partial_{x_p}(\pi(F_1),\pi(F_2)))$$

Let ${\bf u} \in V(1+ \pi(F_1)\cdot \pi(F_2))\subseteq \fdos^n$, that is,
$$(b_1+x_pc_1)(b_2+x_pc_2)\large|_{\bf x = u}=1 \qquad \qquad (\dag)$$}
(the notation used here is as usual: $F(x)\large|_{x = u}$ is $F(u)$).
{\ttp Let us distinguish two cases:
\begin{itemize}
\item If the $p$-coordinate of  $\bf u$ is $0$, then by  $(\dag)$ it follows that 
$$b_1\large|_{\bf x = u}=b_2\large|_{\bf x = u}=1$$ 
Therefore $(1+b_1b_2)\large|_{\bf x = u}=0$. 
\item The $p$-coordinate of  $\bf u$ is $1$. In this case $(b_1+c_1)(b_2+c_2)\large|_{\bf x = u}=1$
\end{itemize}
By examining the definition of $\partial_p$ we conclude  in both cases  that}
$$ \partial_{x_p} (\pi(F_1),\pi(F_2))\large|_{\bf x = u}=1$$
so we have that  ${\bf u}\in V(1+ \partial_{x_p} (\pi(F_1),\pi(F_2)))$.
\end{proof}

\begin{theorem}\label{induce}
$\partial_{p}$ is a forgetting operator
\end{theorem}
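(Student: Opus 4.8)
The plan is to show that $\partial_p$ satisfies the two defining properties of a forgetting operator, namely soundness and the equivalence $\partial_p(F,G) \equiv [\{F,G\}, \mathcal L \setminus \{p\}]$. Soundness has already been established in the preceding proposition, so by Corollary \ref{cli} it suffices to verify the second condition of that corollary: for any $F,G \in Form(\mathcal L)$ and any valuation $v$ on $\mathcal L \setminus \{p\}$ with $v \models \partial_p(F,G)$, there exists an extension $\hat v$ of $v$ to $\mathcal L$ that is a model of $\{F,G\}$. This reduces the whole theorem to the \textbf{lifting direction}, which is exactly the converse of the geometric projection depicted in Fig. \ref{indegeo}.

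First I would translate the statement into the polynomial setting. Writing $\pi(F) = b_1 + x_p c_1$ and $\pi(G) = b_2 + x_p c_2$ with $\degree_{x_p}(b_i) = \degree_{x_p}(c_i) = 0$, the simplified form of the rule gives
$$\partial_{x_p}(\pi(F),\pi(G)) = \Phi\!\left[1 + (1 + b_1 b_2)[1 + (b_1+c_1)(b_2+c_2)]\right].$$
By Theorem \ref{gb} and the correspondence between valuations and points, $v \models \partial_p(F,G)$ means that the point $o_v$ lies in $V(1 + \partial_{x_p}(\pi(F),\pi(G)))$, i.e. $\partial_{x_p}(\pi(F),\pi(G))$ evaluates to $1$ at $o_v$. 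I would then take an arbitrary such point $\mathbf u = o_v$ on the coordinates of $\mathcal L \setminus \{p\}$ and exhibit a value $\epsilon \in \{0,1\}$ for the $p$-coordinate so that the extended point lies in $V(1 + \pi(F)\pi(G))$, which by the soundness-style computation is equivalent to $\hat v \models F \land G$.

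The core case analysis runs as follows. Evaluating at $\mathbf u$, the factor $(1 + b_1 b_2)[1 + (b_1+c_1)(b_2+c_2)]$ equals $1$. Since over $\cuerpo$ a product of two terms is $1$ exactly when both factors are $1$ (note $\Phi$ does not affect the value at a $\cuerpo$-point, as it only normalizes exponents), this forces both $(1 + b_1 b_2)|_{\mathbf u} = 1$ and $[1 + (b_1+c_1)(b_2+c_2)]|_{\mathbf u} = 1$; hence $b_1 b_2|_{\mathbf u} = 0$ and $(b_1+c_1)(b_2+c_2)|_{\mathbf u} = 0$. I then claim one of the two choices $\epsilon = 0$ or $\epsilon = 1$ makes $\pi(F)\pi(G) = (b_1 + \epsilon c_1)(b_2 + \epsilon c_2)$ evaluate to $1$; setting $\epsilon = 0$ gives $b_1 b_2|_{\mathbf u}$ and $\epsilon = 1$ gives $(b_1+c_1)(b_2+c_2)|_{\mathbf u}$, and the \textbf{obstacle} is that both of these are forced to be $0$, not $1$ — so the naive reading of the two factors is the \emph{wrong} one. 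The key step is therefore to recompute carefully: the correct reading is that $v \models \partial_p(F,G)$ makes the bracketed polynomial vanish at $\mathbf u$ when read through $1 + (\cdots)$, and I expect the genuine content to be a parity/counting argument showing that the number of extensions $\epsilon$ with $(b_1+\epsilon c_1)(b_2+\epsilon c_2) = 1$ is odd precisely when $\partial_{x_p}(\pi(F),\pi(G)) = 1$ at $\mathbf u$. Once the bookkeeping is set up correctly, the existence of at least one good $\epsilon$ follows, giving the required extension $\hat v$ and closing the argument via Corollary \ref{cli}.

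The main difficulty I anticipate is precisely this delicate $\cuerpo$-arithmetic: matching the algebraic expression $(1+a_1a_2)(1 + a_1 \partial_x a_2 + a_2 \partial_x a_1 + \partial_x a_1 \partial_x a_2)$ against the semantic condition ``some extension satisfies $F \land G$'' requires tracking which sign conventions ($1 + P$ for models) flip values, and verifying that the $\Phi$-projection is harmless on points of $\fdos^n$. I would handle this by reducing everything to the evaluated scalars $b_1|_{\mathbf u}, c_1|_{\mathbf u}, b_2|_{\mathbf u}, c_2|_{\mathbf u} \in \cuerpo$ and checking the finitely many cases directly, which makes the counting transparent and confirms that $o_v$ extends to a model iff the rule's output is satisfied at $v$.
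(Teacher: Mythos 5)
Your setup is the paper's: reduce via the previous soundness proposition and Corollary \ref{cli} to showing that every model $v$ of $\partial_p(F,G)$ on $\mathcal L\setminus\{p\}$ lifts to a model of $\{F,G\}$, pass to polynomials with $\pi(F)=b_1+x_pc_1$, $\pi(G)=b_2+x_pc_2$, and decide the value of $x_p$ by inspecting the two factors. But the execution of the key step is wrong, twice. First, the evaluation is inverted: $v\models \partial_p(F,G)$ means $\partial_{x_p}(\pi(F),\pi(G))$ equals $1$ at $o_v$, and since the rule is $1+(1+b_1b_2)\bigl[1+(b_1+c_1)(b_2+c_2)\bigr]$ (the $\Phi$ being harmless at points, as you note), this forces the \emph{product} $(1+b_1b_2)\bigl[1+(b_1+c_1)(b_2+c_2)\bigr]$ to \emph{vanish} at $o_v$ --- not to equal $1$ as you first assert. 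You notice the resulting contradiction (your ``obstacle''), which is good, but the conclusion to draw at that point is the easy one: a product over $\cuerpo$ vanishes iff some factor vanishes, so either $b_1b_2|_{o_v}=1$ (take $x_p=0$, since $\pi(F)\pi(G)$ then evaluates to $b_1b_2$) or $(b_1+c_1)(b_2+c_2)|_{o_v}=1$ (take $x_p=1$). That one-line disjunction argument is exactly the paper's proof and finishes the theorem.

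Instead, you pivot to a claimed parity criterion: that the number of good extensions $\epsilon$ is \emph{odd} precisely when the rule output is $1$ at $o_v$. This claim is false. Writing $A=b_1b_2|_{o_v}$ and $B=(b_1+c_1)(b_2+c_2)|_{o_v}$, the rule output is $1+(1+A)(1+B)$, which equals $1$ iff $A=1$ \emph{or} $B=1$; the product-of-complements encodes disjunction (existence of a lift), not exclusive-or. In the case $A=B=1$ (e.g.\ $F=G=p\lor q$ evaluated at a point with $q$-coordinate $1$) both extensions are models, the count is $2$ (even), yet the rule output is $1$ --- and note that a parity criterion would also contradict the soundness you had already invoked, since soundness forces every $v$ with \emph{some} lift (in particular one with two lifts) to satisfy $\partial_p(F,G)$. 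Your fallback of brute-force checking the four scalars $b_i|_{o_v},c_i|_{o_v}$ would indeed reveal the correct existential characterization, but as the argument stands, its announced ``genuine content'' is a statement that is simply not true.
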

\begin{proof}
{The goal is to prove that
$$[\{F_1,F_2\}, \mathcal L \setminus \{p \}] \equiv \partial_p(F_1,F_2)$$
  Let us suppose that $F_1,F_2 \in Form(\mathcal L)$ such that $\pi (F_i)=b_i + x_p c_i$ $i=1,2$ with $b_i,c_i$ polynomial formulas without variable $x_p$. Recall that in this case the expression of the rule is
$$\partial_{x_p}(\pi(F_1),\pi(F_2))= \Phi \left( (1+b_1 \cdot b_2) \left[1+ (b_1+c_1)(b_2+c_2)\right] \right)$$

Since the soundness of $\partial_p$ has been proved by the previous proposition, by Corollary \ref{cli} it is sufficient to show that any valuation $v$ on $\mathcal L \setminus \{p\}$ model of $\partial_p(F_1,F_2)$ can be extended to $\hat v\models \{F_1,F_2\}$. 

Let $v \models \partial_p(F_1,F_2)$. Let us consider the point from $\fdos^n$ asociated to $v$, $o_v$. It follows that
  \begin{eqnarray}
    o_v\in  V(\pi(\partial_p(F_1,F_2))+1) =  V(\partial_{x_p} (\pi(F_1),\pi(F_2))+1) =  \nonumber\\
 =   V((1+b_1 \cdot b_2) [1+ (b_1+c_1)(b_2+c_2)]) \nonumber
    \end{eqnarray}
    
    so  $$\displaystyle \left((1+b_1 \cdot b_2) [1+ (b_1+c_1)(b_2+c_2)]\right)\large|_{\bf x = o_v}=0$$
    
    In order to build the required extension $\hat v$, let us distinguish two cases:
\begin{itemize}
\item If $(1+b_1 \cdot b_2)\large|_{\bf x = o_v}=0$ then  $\hat v= v\cup \{(x_p, 0)\}\models F_1 \land F_2$.

\item If $[1+ (b_1+c_1)(b_2+c_2)]\large|_{\bf x = o_v}=0$ then $$\hat v= v\cup \{(x_p, 1)\}\models F_1 \land F_2$$
\end{itemize}
}

 \end{proof}

{With some abuse of notation, we use the same symbol, $\te_{\partial}$, to denote similar notions that defined in  Def. \ref{calculus} but on polynomial formulas and rules  $\partial_{x_p}$. In that way we can describe $\te_{\partial}$-proofs  on polynomials.  For example, a
$\partial$-refutation  for the set
$\pi[\{p\lif q , q\lor r \lif s , \lnot (p\lif s)\}]$ is}
\begin{enumerate}
\item $1+x_1+x_1x_2 \hfill [\![\pi(p\lif q)]\!]$
\item  $1+(x_2+x_3+x_2x_3)(1+x_4) \hfill [\![\pi(q\lor r \lif s)]\!]$
\item  $x_1(1+x_4) \hfill [\![ \pi(\lnot(p\lif s) ]\!]$
\item  $1+x_1+x_3+x_1x_4+x_3x_4+x_1x_3+x_1x_3x_4 \hfill [\![\partial_{x_2}\ to 
    (1),(2)]\!]$
\item  $0 \hfill [\![\partial_{x_1}\ to\ (3),(4)]\!]$
\end{enumerate}



\begin{corollary}\cite{calculemus}

 $K$ is inconsistent if and only if $K \te_{\partial} \bot$. 
\end{corollary}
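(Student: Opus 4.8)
The plan is to recognize this statement as an immediate specialization of the general completeness result for forgetting-operator calculi, Theorem \ref{completa}, instantiated with the family $\{\partial_p \ : \ p \in \mathcal L\}$. The two facts that license this instantiation have already been secured in the preceding material: the Proposition just proved shows that each $\partial_p$ is sound, and Theorem \ref{induce} shows that each $\partial_p$ is a forgetting operator for the variable $p$. Hence $\{\partial_p \ : \ p \in \mathcal L\}$ is exactly a family of forgetting operators of the sort required as hypothesis of Theorem \ref{completa}, and applying that theorem yields directly that $K$ is inconsistent if and only if $K \te_{\partial} \bot$. No independent re-run of the saturation argument, the Lifting Lemma, or the extension of the empty valuation is needed here, since all of that substance is already absorbed into Theorem \ref{completa}.

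Concretely, first I would state explicitly that $\{\partial_p \ : \ p \in \mathcal L\}$ meets the hypotheses of Theorem \ref{completa}, citing the Proposition for soundness and Theorem \ref{induce} for the forgetting-operator property, and then invoke that theorem to obtain the equivalence verbatim. The one point that genuinely deserves care -- and which I would address explicitly -- is the abuse of notation flagged in the text just before the corollary: the symbol $\te_{\partial}$ was introduced both as the formula-level calculus of Def. \ref{calculus} (driven by $\partial_p(F,G) = \Theta(\partial_{x_p}(\pi(F),\pi(G)))$) and as a polynomial-level calculus driven directly by the rules $\partial_{x_p}$ acting on polynomial formulas. I would check that the two readings coincide, so that reaching the polynomial $0$ by the polynomial rules is the same event as deriving $\bot$ by the formula rules.

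This identification is immediate from the faithful correspondence between $Form(\mathcal L)$ and $\anillo$ modulo $\idos$ established in the Background: one has $\Theta(0) = \bot$, and $\pi(F) = \pi(F')$ holds exactly when $F \equiv F'$, so a $\te_{\partial}$-proof on polynomials terminating in $0$ transports, under $\Theta$, to a $\te_{\partial}$-proof on formulas terminating in $\bot$, and conversely under $\pi$. Consequently the two statements ``$K \te_{\partial} \bot$'' agree, and the instance of Theorem \ref{completa} gives the claim in either formulation. I do not anticipate any real obstacle: the present corollary is a direct instance of the general theorem, and the only thing to watch is the notational bridge between the polynomial and formula presentations just described, which is bookkeeping rather than a difficulty.
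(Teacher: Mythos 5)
Your proposal is correct and takes essentially the same route as the paper, whose proof is exactly the one-line observation that the corollary follows from Theorems \ref{completa} and \ref{induce}. Your additional care about the polynomial-versus-formula reading of $\te_{\partial}$ is sound bookkeeping (justified by $\Theta(0)=\bot$ and the faithfulness of $\pi$ modulo $\idos$) but does not change the argument.
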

\begin{proof} It is consequence of theorems \ref{completa} and \ref{induce}

 \end{proof}

The result, in algebraic terms, is as follows:

\begin{corollary}\label{prop2}
 Let $F\in Form({\mathcal L})$ and let $K$ be  a knowledge basis. The following conditions are
  equivalent:
\begin{enumerate}
\item  $K \m F$
\item $J_K\te_{\partial} 0$
\end{enumerate}
\end{corollary}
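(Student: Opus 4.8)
The plan is to read this as the algebraic restatement of the previous corollary, once the entailment question has been turned into a consistency test. First I would invoke the classical refutation principle of propositional logic: $K\m F$ holds if and only if $K\cup\{\lnot F\}$ is inconsistent. This is immediate from the semantics, since a valuation witnessing $K\not\m F$ is precisely a model of $K$ together with $\lnot F$. Hence the goal reduces to giving an algebraic criterion for the inconsistency of the augmented base $K\cup\{\lnot F\}$, and the generators to be fed to the polynomial calculus will be those of $J_{K\cup\{\lnot F\}}$.

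Next I would apply the preceding corollary to $K\cup\{\lnot F\}$, obtaining that $K\cup\{\lnot F\}$ is inconsistent if and only if $K\cup\{\lnot F\}\te_{\partial}\bot$. The only remaining task is to transport this refutation from formulas to polynomial formulas. The bridge is the commutation of the projection $\pi$ with the rule: since $\partial_p(F,G):=\Theta(\partial_{x_p}(\pi(F),\pi(G)))$, since $P(\Theta(a))=a$, and since $\Phi$ is idempotent on polynomial formulas, one gets
$$\pi(\partial_p(F,G))=\Phi(\partial_{x_p}(\pi(F),\pi(G)))=\partial_{x_p}(\pi(F),\pi(G)).$$
Applying $\pi$ termwise to a $\te_{\partial}$-proof $F_1,\dots,F_n$ of $\bot$ in $K\cup\{\lnot F\}$ therefore produces a $\te_{\partial}$-proof $\pi(F_1),\dots,\pi(F_n)$ of $\pi(\bot)=0$ whose premises are exactly the polynomial formulas encoding $K\cup\{\lnot F\}$, i.e. the generators of $J_{K\cup\{\lnot F\}}$; and, using $\Theta$ together with $\Theta(P(F))\equiv F$, the same correspondence runs backwards. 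Chaining the three equivalences yields $K\m F$ if and only if $J_{K\cup\{\lnot F\}}\te_{\partial}0$, which is condition (2).

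The step I expect to be the main obstacle is not logical but bookkeeping: making the equivalence of the two calculi completely airtight. One must verify that $\pi$ sends premises to premises and each rule application to a rule application in both directions, so that a polynomial derivation of $0$ genuinely lifts through $\Theta$ to a formula derivation of $\bot$, and conversely; and one must check that passing to representatives modulo $\idos$ loses nothing, which is exactly what the idempotence of $\Phi$ guarantees when the doubly applied projection in $\pi(\partial_p(F,G))$ collapses onto the single $\Phi$ already built into the polynomial rule $\partial_{x_p}$. Once this step-by-step dictionary between the formula calculus and the polynomial calculus is in place, the corollary is merely the composition of the refutation principle, the earlier refutational-completeness result, and this transfer.
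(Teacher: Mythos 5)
Your proposal is correct and takes essentially the same route as the paper: the paper's proof is precisely your chain --- reductio ($K\m F$ iff $K\cup\{\lnot F\}$ is inconsistent), then the preceding refutational-completeness corollary, then the transfer of the refutation to the polynomial calculus, which the paper compresses into a single ``Thus'' and which your identity $\pi(\partial_p(F,G))=\partial_{x_p}(\pi(F),\pi(G))$ makes explicit. The one slip is terminological: applying $\pi$ termwise to a formula refutation from $K\cup\{\lnot F\}$ yields premises $\{\pi(G)\ :\ G\in K\}\cup\{1+\pi(F)\}$, which are not literally ``the generators of $J_{K\cup\{\lnot F\}}$'' (encoding and generator differ by the summand $1$), but this conflation is inherited from the paper itself, whose statement of condition (2) omits $F$ altogether and whose own proof displays the complemented set $\{1+\pi(G)\ :\ G\in K\}\cup\{\pi(F)\}$, so it reflects an imprecision of the source rather than a gap in your argument.
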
 
\begin{proof}

$(1)\Lifn (2)$:
Let us suppose $K\models F$. Then $K+ \{\lnot F\}$ is inconsistent. Since $\partial_p$ is refutationally complete, $K+ \{\lnot F\}\te_{\partial}\bot$. Thus
 $$ \{1 + \pi (G) \ : \ G\in K\} \cup \{\pi(F) \} \te_{\partial} 0$$
 
 $(2)\Lifn (1)$: If a $\partial$-refutation is founded on polynomials, then by above theorem $K \cup \{\lnot F\}$ is inconsistent.
  \end{proof}

 \begin{remark}
 To compute conservative retractions we use an implementation (in Haskell language) of $\partial_p$ and $\partial_{x_p}$. In order to simplify the presentation, we only show the computation on polynomials (that is, the application of $\partial_{x_p}$), and we use the own propositional variables as polynomial variables (that is, we identify $p$ and $x_p$) to facilitate the readibility.
 
The software used in the examples and experiments can be downloaded from \url{https://github.com/DanielRodCha/SAT-Pol}

 \end{remark}
 
\begin{example}  Let  $G=s\to r$ and $K$ be the KB
$$
K = \left\{
\begin{array}{l}
t \land p \liff s \\
t \land r \to s \\
t \land q \to s \\
p \land q \land s \land t \to r \\
\end{array}
\right.
$$

 To decide whether $K\models G$ -by applying location lemma- we have to compute  
 $$\partial_{\mathcal L \setminus \{r,s\}}[K]\equiv \partial_p[\partial_q[\partial_t[K]]] $$
{\tt 
\begin{itemize}
\item[*] [pqrst+pqst+1,pt+s+1,qst+qt+1,rst+rt+1]     \hfill    (projection)
\item[*] [pqrs+pqs+ps+s+1, ps+s+1, 1]                         \hfill   (forgetting t)
\item[*] [ps+s+1, 1]                                                         \hfill  (forgetting q)
\item[*] [1]                                                                        \hfill (forgetting p)
\end{itemize}
}
\noindent Therefore:

$$[K, \mathcal L \setminus \{r,s\}] 
\equiv \{\top \}\not\models G
$$ 

Consider now the formula $F = p\land q \land t \to s$. In order to decide whether $K\models F$,  by location lemma we have to compute $ [K, \mathcal L (F)]\equiv\partial_r[K]$
{\tt 
\begin{itemize}
\item[*]     [pqrst+pqst+1,pt+s+1,qst+qt+1,rst+rt+1] \hfill     (projection)
\item[*]     [pqst+pqt+pt+qst+qt+s+1,pt+s+1,qst+qt+1,1]  \hfill     (forgetting r)
\end{itemize}
}
To see that $\partial_r[K]\models F$ it is sufficient to show that $\partial_r[K] \cup \{\lnot F\}$ is inconsistent. The computation is made in the projection set, by saturating the polynomial set:
{\tt 
\begin{itemize}
\item[*]               [pt+s+1,qst+qt+1, pqst+pqt]          \hfill   (the retraction and $\lnot$ F)
\item[*] [0] \hfill (applying $sat_{\partial}$)
\end{itemize}
}
 \end{example}

\smallskip

{An approach to specify contexts in AI for reasoning is to determine which set of variables $Q\subseteq K$ provides information and which variables are irrelevant for represent the specific context.  In fact, in some approaches for formalizing context-based reasoning contexts are determined by this variable set. When $K$ does not provide any specific information about the context in which it is to be used, it is natural to conclude $[K, Q]$ should only contain tautologies, that is, $\partial_{\mathcal L \setminus Q}[K]=\{\top\}$.  
In the previous example $K$ does not provide relevant information about the context determined by  $\{r, s\}$, because $[K, \mathcal L \setminus \{r,s\}] =  \{\top \}$}.

%
%
%
%
%
%
%


\section{Characterization 
  sensitive implications}

{In addition to its use in the design and study of the independence rule, other use of Boolean derivatives is the detection of variables that are irrelevant in a formula (or, in terms of \cite{for}, to study when a formula is independent of a variable), and more generally, when a variable is irrelevant in a formula relativized to a KB (that is, in the models of the own KB).}

{We will say that a variable $p$ is {\bf irrelevant} in a formula $F$ (or $F$ is independent of $p$) if $F$ is equivalent to a formula in which $p$ does not occur. This concept can be generalized to a set of variables in the natural way, and it can be proved that a $F$ formula is independent of a set of variables $X$ if and only if $F$ is independent from each variable of $X$ (see \cite{for}). In this paper also remarks the following result:}
\begin{proposition} The following conditions are equivalent:
\begin{enumerate}
\item $F$ is independent from $x$
\item $F\{x/\top\} \equiv F\{x/\bot\}$
\item $F\{x/\top\} \equiv F$
\item $F\{x/\bot\} \equiv F$
\end{enumerate}
\end{proposition}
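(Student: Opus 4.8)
The plan is to prove the four conditions equivalent through the cycle $(1)\Rightarrow(3)\Rightarrow(2)\Rightarrow(4)\Rightarrow(1)$, relying on two elementary facts about propositional substitution. The first is the \emph{substitution lemma}: if $A\equiv B$ then $A\{x/C\}\equiv B\{x/C\}$ for any formula $C$, immediate from the observation that $v(A\{x/C\})=v'(A)$, where $v'$ agrees with $v$ except possibly at $x$, where $v'(x)=v(C)$. The second is the \emph{case-split identity} $F\equiv (x\land F\{x/\top\})\lor(\lnot x\land F\{x/\bot\})$, which one checks by evaluating both sides on an arbitrary valuation $v$ according to whether $v(x)=1$ or $v(x)=0$. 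I will also use repeatedly the trivial observation that $x\notin var(F\{x/\top\})$ and $x\notin var(F\{x/\bot\})$, so that substituting for $x$ in either of these two formulas leaves it unchanged.

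For $(1)\Rightarrow(3)$: if $F$ is independent from $x$, fix $G$ with $x\notin var(G)$ and $F\equiv G$; applying the substitution lemma to $F\equiv G$ with $x/\top$ gives $F\{x/\top\}\equiv G\{x/\top\}=G\equiv F$, which is $(3)$. For $(3)\Rightarrow(2)$: starting from $F\equiv F\{x/\top\}$, apply the substitution lemma with $x/\bot$; since $F\{x/\top\}$ contains no $x$, this yields $F\{x/\bot\}\equiv F\{x/\top\}$, i.e.\ $(2)$. For $(2)\Rightarrow(4)$: substitute $(2)$ into the case-split identity, replacing $F\{x/\top\}$ by the equivalent $F\{x/\bot\}$, to obtain $F\equiv (x\land F\{x/\bot\})\lor(\lnot x\land F\{x/\bot\})\equiv F\{x/\bot\}$, which is $(4)$. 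Finally $(4)\Rightarrow(1)$ is immediate, because $F\{x/\bot\}$ is a formula equivalent to $F$ in which $x$ does not occur, so $F$ is independent from $x$ by definition.

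No step is genuinely hard; the whole argument rests on the two lemmas above, so the only real care needed is in establishing them cleanly, in particular making precise that equivalence is preserved under uniform substitution and that the case-split identity holds. As an alternative route that fits the algebraic machinery of the paper, one could translate everything through the polynomial projection, using that $F\equiv F'$ iff $\pi(F)=\pi(F')$ together with the $p$-normal form of Lemma \ref{desc}: writing $\pi(F)=a+x\,b$ with $a,b$ free of $x$, one has $\pi(F\{x/\top\})=a+b$ and $\pi(F\{x/\bot\})=a$, and each of the four conditions reduces to the single algebraic assertion $b=0$ (equivalently $\frac{\partial}{\partial x}\pi(F)=0$), giving the equivalence at once. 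I expect the logical cycle to be the shortest to write out.
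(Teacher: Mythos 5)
You should know at the outset that the paper contains no proof of this proposition: it is quoted as a result remarked in the cited literature on formula--variable independence (the reference given there), and the paper's only own contribution is to append a fifth equivalent condition, $\models \lnot \frac{\partial}{\partial p}(F)$. So there is no internal argument to compare yours against; the question is whether your proof stands on its own, and it does. The cycle $(1)\Rightarrow(3)\Rightarrow(2)\Rightarrow(4)\Rightarrow(1)$ is sound: the substitution lemma and the case-split (Shannon) expansion are both correct as stated, the observation that $x\notin var(F\{x/\top\})$ is exactly what makes $(3)\Rightarrow(2)$ work, and $(4)\Rightarrow(1)$ is immediate from the definition of independence. One small point in $(2)\Rightarrow(4)$: replacing $F\{x/\top\}$ by the equivalent $F\{x/\bot\}$ inside the case-split identity uses replacement of equivalents (congruence of $\equiv$ with respect to the connectives), a third elementary fact you should list next to your two lemmas, though it is no harder to justify than they are. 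Your alternative algebraic sketch is in fact the route that best fits the section in which the proposition sits: writing $\pi(F)=a+xb$ with $a,b$ free of $x$ as in Lemma \ref{desc}, one gets $\pi(F\{x/\top\})=a+b$ and $\pi(F\{x/\bot\})=a$, all four conditions collapse to $b=0$, and since $\Theta(b)=\frac{\partial}{\partial p}F$ this single computation also proves the paper's appended condition $5$, which your logical cycle never touches. So if the goal is merely the four stated equivalences, your cycle is the shortest correct argument; if the goal is the proposition as the paper actually uses it (with the Boolean-derivative characterization), the polynomial route is strictly more useful.
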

To which we could add:
\begin{enumerate}
\item[5.] $\models \lnot \frac{\partial}{\partial_p}(F)$ 
\end{enumerate}

{In this section we are interested in studying the notion of independence relativized to a KB. Note that it may happen that a variable may be relevant in a formula but not in the KB models we are working with. To distinguish the relativized notion from the original we will use the word {\em sensitive}.}

\begin{definition}
A formula $F$ is called {\bf sensitive in $p$ with respect to} a knowledge basis $K$
if $K\not\m F\{p/\lnot p\}\liff F$. We say that $F$ is sensitive w.r.t. $K$
(or simply sensitive, if $K$ is 
fixed)  if $F$ is sensitive in all its variables. 
\end{definition}

The following result habilitates the use of Gröbner basis for determining sensitiveness (by means of ideal membership test in condition (4)) or that of our interest, by means of $\partial_p$-rules (condition (3)).

{\ttp It is straightforward to check that:}
\begin{proposition} \label{sensitive}
Let $p\in$var$(F)$. The following conditions are equivalent:
\begin{enumerate}
\item  $F$ is {\em sensitive} in $p$ with respect to a knowledge basis $K$
\item $K\cup \{\frac{\partial}{\partial_p}(F)\}$ is consistent
\item  $sat_{\partial}[ K\cup \{\frac{\partial}{\partial_p}(F)\}] =\{\top \}$
\item $ \partial_{x_p} \pi(F) \notin J_K + \idos$ 
\end{enumerate}
\end{proposition}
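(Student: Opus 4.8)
The plan is to prove the equivalence of all four conditions by establishing a cycle of implications, exploiting the machinery already developed. The core observation is that sensitiveness in $p$ is defined precisely by $K \not\models F\{p/\lnot p\} \liff F$, and by Proposition \ref{derivada} we have $\frac{\partial}{\partial p} F \equiv \lnot(F\{p/\lnot p\} \liff F)$. This identity is the bridge that connects the definition to the Boolean derivative, and hence to the algebraic conditions.

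\textbf{Proof proposal.} First I would establish $(1) \Liffn (2)$. By definition, $F$ is sensitive in $p$ with respect to $K$ means $K \not\models F\{p/\lnot p\} \liff F$. This says that not every model of $K$ is a model of $F\{p/\lnot p\} \liff F$; equivalently, some model of $K$ is a model of $\lnot(F\{p/\lnot p\} \liff F)$. By Proposition \ref{derivada}, $\lnot(F\{p/\lnot p\} \liff F) \equiv \frac{\partial}{\partial p} F$, so this is exactly the assertion that $K \cup \{\frac{\partial}{\partial p} F\}$ has a model, i.e. is consistent. This handles the first equivalence directly.

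Next I would connect $(2)$ and $(3)$. The equivalence between consistency of a KB and the outcome of saturation is precisely the content of Theorem \ref{completa}: a KB is consistent if and only if its saturation yields $\{\top\}$ rather than containing $\bot$. Applying this to the KB $K \cup \{\frac{\partial}{\partial p} F\}$ gives $(2) \Liffn (3)$ immediately, since $sat_{\partial}$ produces a subset of $\{\bot, \top\}$ and the consistent case is exactly $\{\top\}$. Finally, for $(2) \Liffn (4)$, I would translate consistency into the algebraic language via Corollary \ref{prop2} (or Theorem \ref{gb} directly). The KB $K \cup \{\frac{\partial}{\partial p} F\}$ is inconsistent if and only if it entails $\bot$, which by Theorem \ref{gb}.(2) means $1 + P(\bot) = 1 \in J_K + \langle 1 + \pi(\frac{\partial}{\partial p} F)\rangle + \idos$. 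Working this out, inconsistency corresponds to $\partial_{x_p}\pi(F) \in J_K + \idos$; hence consistency is the negation, namely $\partial_{x_p}\pi(F) \notin J_K + \idos$, which is condition $(4)$.

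\textbf{Main obstacle.} The only genuinely delicate step is the algebraic bookkeeping in $(2) \Liffn (4)$: I must verify that adding the polynomial formula $\pi(\frac{\partial}{\partial p} F) = \partial_{x_p}\pi(F)$ as a new hypothesis translates, under the Nullstellensatz correspondence, into the exact membership statement $\partial_{x_p}\pi(F) \in J_K + \idos$ for the inconsistent case. This requires care because $J_K$ is generated by the polynomials $1 + P(G)$ for $G \in K$, and one must track the $+1$ shifts between a formula $H$ and its translated ideal generator $1 + P(H)$ correctly; a sign or shift error here would invert the membership condition. Everything else reduces to invoking the already-proven results, so I expect the write-up to be short once this translation is pinned down.
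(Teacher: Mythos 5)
Your proposal is correct, and two of its three legs coincide with the paper's own proof: the equivalence $(1)\Liffn(2)$ via Proposition \ref{derivada}, and $(2)\Liffn(3)$ via soundness and refutational completeness of $\te_{\partial}$ (Theorem \ref{completa}), are exactly the paper's steps $(1)\Rightarrow(2)$ and $(2)\Rightarrow(3)$. Where you genuinely diverge is the treatment of condition $(4)$: you organize the argument as a hub around $(2)$ and translate consistency of $K\cup\{\frac{\partial}{\partial p}F\}$ into ideal membership through Theorem \ref{gb}.(2), reducing everything to the equivalence $1\in J_K+\langle 1+\partial_{x_p}\pi(F)\rangle+\idos \Liffn \partial_{x_p}\pi(F)\in J_K+\idos$, whereas the paper closes a cycle $(3)\Rightarrow(4)\Rightarrow(1)$ by arguing directly with witness points: a model $v$ of $K\cup\{\frac{\partial}{\partial p}F\}$ yields a point $o_v\in V(J_K)$ with $\partial_{x_p}\pi(F)(o_v)=1$, so the derivative cannot lie in $J_K+\idos=I(V(J_K))$; conversely, non-membership yields (by the Nullstellensatz) a point $o\in V(J_K)$ where the derivative does not vanish, and $v_o$ is then a model of $K\cup\{\frac{\partial}{\partial p}F\}$, giving sensitiveness. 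The ``main obstacle'' you flag is real but is resolved by precisely the ingredient the paper uses: the easy direction of your membership equivalence is the identity $1=(1+a)+a$, while the hard direction ($1\in J_K+\langle 1+a\rangle+\idos$ implies $a\in J_K+\idos$) amounts to observing that the joint vanishing set is empty, hence $a$ vanishes on all of $V(J_K)$, and then invoking $I(V(J_K))=J_K+\idos$. So both proofs rest on the same pillars (Proposition \ref{derivada}, Theorem \ref{completa}, Nullstellensatz); your route buys a purely ideal-theoretic formulation of the last step, directly checkable by Gr\"{o}bner bases in the spirit of the remark following Theorem \ref{gb}, at the price of an extra membership lemma, while the paper's model--point dictionary produces the witnesses explicitly and keeps that step more elementary.
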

\begin{proof} 

$(1) \Lif (2)$: Since $K\not\models F\{p/\lnot p\} \liff F$, there exists $v\models K$ where $v\models \lnot(F\{p/\lnot p\} \liff F)$, that is, $v\models K\cup \{\frac{\partial}{\partial_p}(F)\}$

$(2) \Lif (3)$ by completeness of $\te_{\partial}$

$(3) \Lif (4)$: Let $v\models K\cup \{\frac{\partial}{\partial_p}(F)\}$ (it is consistent by $(3)$. Then $\pi(\frac{\partial}{\partial_p}(F))(o_v)=1$. 

 Moreover $\pi(G)(o_v)=1$ for any $G\in K$ hence $o_v\in J_K$. Therefore the polynomial $\frac{\partial}{\partial_p}\pi(F)$ does not belong to $J_K$


$(4) \Lif (1)$: Suppose  $\frac{\partial}{\partial_{x_p}}\pi(F) \notin J_k + \idos$ so there exists $o\in V(J_k)$ such that   $\frac{\partial}{\partial_{x_p}}\pi(F)(o)\neq 0$. Then $v_o \models \frac{\partial}{\partial_p}(F)$ and $v_o\models K$. Therefore $K\cup \{\frac{\partial}{\partial_p}(F)\}$ is consistent, so $F$ is sensitive in $p$ w.r.t. $K$.

%
%
%

\end{proof}

{From the definition itself it follows that Boolean derivatives can be used to tackle the problem of sensitive arguments in implications}: $F$ is not sensitive in $p$ w.r.t. $K$  iff $K\m \lnot
\frac{\partial}{\partial p} F$.
 In this case,
there exists  $G$ with
var$(G)=$var$(F)\setminus \{p \}$ such that $K\m F\liff G$
(e.g. $F\{p/\bot\}$).

\begin{example}\label{KK} {Lets us consider the following consistent KB as a rule-based system}
{\ttp
$$ K = \left\{
\begin{array}{ll}
R1 : p_1 \to p_9 \\
R2 : p_1 \to p_{10} \\
R3 : \lnot p_2 \to p_9  \\
R4 : \lnot p_2 \to p_{10}  \\
R5 : (p_1 \land p_7) \to p_{11} \\
R6 : p_3 \to p_7\\
R7 : p_3 \to p_{10}\\
R8 : p_4 \to p_{11}\\
R9 : p_5 \to p_8\\
R10 : p_6 \to p_9\\
\end{array}\right.
$$
{Let us consider as a set of potential facts (potential inputs of the system) }
$$\mathcal F = \{p_1, \dots , p_6, \lnot p_1, \dots , \lnot p_6\}$$
{We will say that a rule} $R\in K$ is sensitive in $p$ w.r.t. to  $K$ and a subset of potential facts $\mathcal C$ if  $R$ is sensitive in $p$ w.r.t. $K\cup \mathcal C$. Let us compute some examples:
\begin{itemize}
\item $R1$ is sensitive in $p_1$ w.r.t. $K$ and the potential fact set $\{\lnot p_2\}$.
$$\frac{\partial}{\partial_{p_1}}(R1) = \theta (\frac{\partial}{\partial_{x_1}}(1+x_1(1+x_9))=\lnot p_9$$
and
$$K\cup\{\lnot p_2\} \not\models \lnot  p_9$$ 
This  condition can be checked (by using condition $(3)$ of above proposition) showing that  
$$[K\cup\{\lnot p_2\}, \{p_9\}]\equiv \partial_{\mathcal L \setminus \{ p_9\}}[K\cup\{\lnot p_2\}]=\{p_9\}\models \lnot p_9$$
The computation is:
{\tt \small
\begin{itemize}
\item[*] [p1p10+p1+1,
           p1p11p7+p1p7+1,
           p1p9+p1+1, p10p2+p10+p2,
           
           p10p3+p3+1,
           p11p4+p4+1,
           p2p9+p2+p9,
           p2+1,
           p3p7+p3+1,
           
           p5p8+p5+1,
           p6p9+p6+1] \hfill (projection of $K\cup\{\lnot p_2\}$)
\item[*] [p10p2+p10+p2,p10p3+p3+1,p11p4+p4+1,p2p9+p2+p9,
            p2+1,
            
            p3p7+p3+1,
            p5p8+p5+1,p6p9+p6+1,1] \hfill (forgetting p1)
\item[*] [p10p3+p3+1,p10,p11p4+p4+1,p3p7+p3+1,p5p8+p5+1,
           
           p6p9+p6+1,p9,1] \hfill (forgetting p2)
\item[*] [p10,p11p4+p4+1,p5p8+p5+1,

p6p9+p6+1,p9,1] \hfill (forgetting p3)
\item[*] [p10,p5p8+p5+1,

p6p9+p6+1,p9,1]  \hfill (forgetting p4)
\item[*] [p10,p6p9+p6+1,p9,1] \hfill (forgetting p5)
\item[*] [p10,p9,1] \hfill (forgetting p6)
\item[*] [p10,p9,1] \hfill (forgetting p7)
\item[*] [p10,p9,1] \hfill (forgetting p8)
\item[*] [p9,1] \hfill (forgetting p10)
\item[*]  [p9,1] \hfill (forgetting p11)
\end{itemize}

}
\item $R5$ is not sensitive in $p_1$ w.r.t. the set $\{p_4\}$:
$$\frac{\partial}{\partial_{p_1}}(R_5)=\Theta(\frac{\partial}{\partial_{x_1}}(1+(x_1x_7)(1+x_{11})))=\Theta(x_7(1+x_{11}))=p_7\land \lnot p_{11}$$
and $K\cup\{p_4\}\not\models \frac{\partial}{\partial_{p_1}}(R_5)$. In fact

$$[K\cup\{p_4\} , \{p_7,p_{11}\}]\equiv  \partial_{\mathcal L \setminus \{ p_7 , p_{11}\}}[K\cup\{p_4\}] =  \{p_{11}\}\models \lnot  \frac{\partial}{\partial_{p_1}}(R_5)$$
\end{itemize}
 
}
\end{example}

\section{Some applications}

{We will illustrate the usefulness of the tools presented to work on different types of KRR-related tasks. For reasons of paper length we will only describe two of them.}

\subsection{Detecting potentially dangerous states}
{In \cite{amai} authors show an algebraic method
for detecting potentially dangerous states in a Rule Based Expert System (RBES) whose
knowledge is represented by Propositional  Logic. Given $K$ made up of rules, the idea is to consider the potential facts that make $K$ to infer an unwanted value (which leads a danger or undesirable  state). Formally, they specify both the set $\mathcal F$ of potential facts  (potential input literals of the RBES) and dangerous states. 

Let us consider the following example, taken from the same \cite{amai}, to show that it avoids dangerous situations and  $K$ from example \ref{KK}.}
{Let 

$$\mathcal F = \{p_1, \dots , p_6, \lnot p_1, \dots , \lnot p_6\}$$ 

\noindent and let  $p_{11}$ be a warning variable of a dangerous state. We know the initial state the information $\{p_1,\lnot p_2\}$, which is a secure information because $K\cup \{p_1,\lnot p_2\}\not \models p_{11} $. 

In this case the question  is to detect which potential facts lead us to that dangerous state, i. e. which literals $r\in \mathcal F$ verifying that 
$$K\cup \{p_1,\lnot p_2\} \cup \{ r\} \models p_{11}$$
 By deduction theorem it is equivalent to decide whether 
$$K\models p_1 \land \lnot p_2  \land r \to p_{11}$$
To apply Location Lemma it is sufficient to bear in mind that this will be true if and only if

\smallskip

\hspace*{2cm} $[K, \{p_1,\dots p_6 \} \cup \{p_{11}\}]\models p_1 \land \lnot p_2  \land r \to p_{11}$ \hfill $(\dag \dag)$

\smallskip

In this case, $$K'=[K, \{p_1,\dots p_6 \} \cup \{p_{11}\}] \equiv  \partial_{\mathcal L \setminus (\{p_1, \dots , p_6, p_{11}\})}[K]$$ In poynomial terms, this $KB$ is represented as $$J=\{\tt p1p11p3+p1p3+1,p11p4+p4+1,1\}$$


In order to use the results of this paper it suffices to use Deduction Theorem for reducing condition $(\dag \dag)$ to

$$[K, \{p_1,\dots p_6 \} \cup \{p_{11}\}]  \cup \{p_1,\lnot p_2\} \models r \to p_{11}$$

\smallskip

\noindent and, by Reductio ab absurdum we have to find variables $r$ from $\mathcal F$ such that

$$[K, \{p_1,\dots p_6 \} \cup \{p_{11}\}]  \cup \{p_1,\lnot p_2,\lnot p_{11}\} \models \lnot r$$

Obviously $p_1, p_2$ do not satisfy this.
\begin{itemize}
\item For $i=3,4$ we have $[K',\{p_i\}]\equiv\{\lnot p_i\}$ (because the polynomial projection, computed using $J$, is $\tt 1+pi$), so they are  dangerous variables
\item For $i=5,6$ we have $[K',\{p_i\}]\equiv\{\top\}$, (because the polynomial projection is $\tt 1$), so they are not dangerous variables
\end{itemize}


 


\subsection{Decomposing KB for context-based reasoning}

This example is taken from \cite{Amir}. Let us suppose we analyze the behavior of an espresso machine whose functioning aspects are captured by the axioms enumerated in Fig. \ref{espresso}. The first four axioms denote that if the machine pump is OK and the pump is on, then the machine has a water supply. Alternately, the machine can be filled manually, but this never happens when the pump is on. The next four axioms denote that there is some steam if and only if the boiler is OK and it is on and there is enough water supply. The last three axioms denote that there is always either coffee or tea, and that steam and coffee (or tea) result in a hot drink.  Let 
the knowledge base for an espresso machine described in Fig. \ref{espresso}. {Although the example is very simple, it is very useful to show the applicability of the paper tools.  We have relied on the cases of \cite{Amir}. }

\begin{figure}[t]
\begin{center}
\includegraphics[width=12cm]{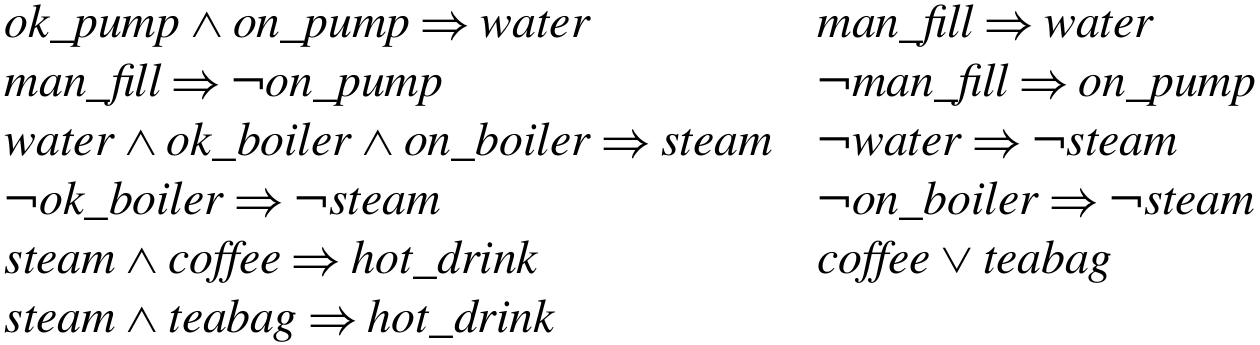}
\end{center}
\caption{Knowledge Base $\mathcal A$ for an espresso machine from \cite{Amir}\label{espresso}}
\end{figure}


{Authors in \cite{Amir} obtain the partition described in  Fig. \ref{part}. The languages from these partitions can be used to compute
 conservative retractions, in order to check the completeness of the refined KB obtained in the above-mentioned paper. We compute the conservative retractions by using operators $\partial_p$ defined in this paper.

\begin{figure}[t]
\begin{center}
\includegraphics[width=14cm]{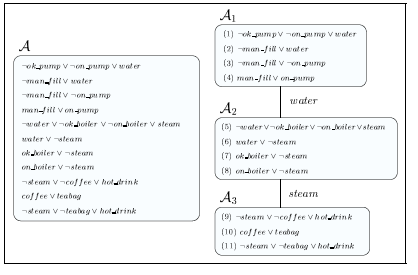}
\end{center}
\caption{Partition of Knowledge Base of Fig. \ref{espresso}, extracted from \cite{Amir}\label{part}}
\end{figure}

\begin{itemize}
\item $\mathcal L (\mathcal A_1)=\{on$\_$pump, ok$\_$pump, water, man$\_$fill\}$. The analogous one in our approach, $[\mathcal A , \mathcal L (\mathcal A_1)]$, is equivalent to $\mathcal A_1$,

$$[\mathcal A , \mathcal L (\mathcal A_1)]=\left\{
\begin{array}{l}
ok$\_$pump \land on$\_$pump \to water, \ man$\_$fill \to water, \\
 man$\_$fill \to \lnot on$\_$pump, \  \lnot man$\_$fill \to on$\_$pump
 \end{array}
 \right\} $$

\item $\mathcal L (\mathcal A_2)=\{water, on$\_$boiler, ok$\_$boiler, steam\}$.
The analogous one in our approach, $[\mathcal A , \mathcal L (\mathcal A_2)]$, is also equivalent $\mathcal A_2$,

$$[\mathcal A , \mathcal L (\mathcal A_2)]=\left\{
\begin{array}{l}
 steam \to ok$\_$boiler, \
steam \to on$\_$boiler,\
steam \to water,\\
ok$\_$boiler \land on$\_$boiler\land water \to steam\end{array}
 \right\}$$

\item Finally $\mathcal L (\mathcal A_3)=\{steam, coffee, hot$\_$drink, teabag\}$. The analogous one in our approach, $[\mathcal A , \mathcal L (\mathcal A_3)]$, is also equivalent to $\mathcal A_3$,

$$[\mathcal A , \mathcal L (\mathcal A_3)]=\left\{
\begin{array}{l}
coffee\land steam \to hot$\_$drink, \ coffee \lor  teabag, \\
steam\land teabag \to hot$\_$drink, 
\end{array}
 \right\}$$
\end{itemize}

\section{Experiments}

The proposal of this paper is of a general nature and is not specialized
in specific fragments of propositional logic. Nevertheless, in this
section we are going to experimentally compare saturation based on the
independence rule with saturation based on the basic rule of forgetting
variables (see \cite{lang}), which can be applied to
propositional logic without syntactic restrictions, to which we add a
simplification operator (we call {\em canonical} to this rule). We will formally see
this idea in the next subsection. In adition, before describing the experimental results, we will show in the
second subsection a refinement of the retraction to decrease the number
of rule applications (for any of them).

\subsection{Canonical forgetting operator}

A specific syntactic feature of the forgetting operator $\partial_p$  is that,  if $\partial_p(F,G)$ is a tautology, then $\partial_p(F,G)=\top$, and if $\partial_p(F,G)$ is inconsistent then $\delta_p(F,G)= \bot$. This characteristic is a consequence of the pre- and post-processing of formulas by means of polynomial translations and vice versa. Under this translation, tautologies and inconsistencies  are algebraically simplified to $\top$ and $\bot$ respectively. 
It would not be true for any forgetting operator in general, but at least you can achieve some simplification by eliminating occurrences of $\top, \bot$,  producing thus only reduced formulas (with no occurrences of $\top$ and $\bot$). For this purpose we use a {\em simplification operator}:
$$\sigma : Form(\mathcal L) \to Form^r (\mathcal L)$$
(where $Form^r (\mathcal L)$ is the set of  reduced formulas) defined as:
\begin{enumerate}
\item $\sigma(s)=s $ if $s\in  \{\top, \bot\}$, $\sigma(\lnot \top)=\bot$ and $\sigma(\lnot \bot)=\top$, 
\item $\sigma (F)= F$ if $\bot$, $\top$ do not occur in $F$, and in other case:
\begin{enumerate}
 \item $\sigma (\top \land F)= \sigma(F)$, $\sigma (\top \lor F)= \top$
 \item $\sigma (\bot \land F)= \bot$   and $\sigma (\bot \lor F)= \sigma(F)$
\item $\sigma (\top \lif F)= \sigma(F)$, $\sigma (F \lif \top)= \top$, $\sigma (\bot \lif F)= \top$ and $\sigma (F \lif \bot)= \sigma(\lnot \sigma(F))$
\item If $F,G\neq \top,\bot$, $\sigma(F*G)=\sigma(\sigma(F)*\sigma(G))$  for $*\in \{\land,\lor, \lif\}$ and $\sigma(\lnot F)=\sigma(\lnot \sigma(F))$
\end{enumerate}
\end{enumerate}
For example, 
$$\sigma((\bot \to p)\land (\top \land q))= \sigma (\sigma(\bot \to p)\land \sigma(\top \land q))= \sigma(\sigma(p)\land\sigma(q))=\sigma(p\land q  )=p\land q$$

It is straight to see that $\sigma\circ \delta \equiv \delta$ for any operator $\delta$. 

\begin{definition} The {\bf canonical forgetting operator} for a variable $p$ is defined as $$\delta_p^0=\sigma \circ \delta^*_p$$
where $$\delta_p^* (F,G) := (F\land G)\{p/\top\}\lor (F\land G)\{p/\bot\}$$
\end{definition}

\begin{proposition} $\delta_p^0$ is a forgetting operator for $p$
\end{proposition}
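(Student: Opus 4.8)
The plan is to verify the two defining conditions of a forgetting operator for the concrete map $\delta_p^0 = \sigma \circ \delta_p^*$, namely soundness and the equivalence $\delta_p^0(F,G)\equiv[\{F,G\},\mathcal{L}\setminus\{p\}]$. Since $\sigma$ only rewrites occurrences of $\top,\bot$ and we noted $\sigma\circ\delta\equiv\delta$ for any operator $\delta$, it suffices to prove that the unsimplified map $\delta_p^*(F,G)=(F\land G)\{p/\top\}\lor(F\land G)\{p/\bot\}$ is a forgetting operator; the composition with $\sigma$ changes neither the semantics nor the target language, so $\delta_p^0$ inherits the property. I would first record that $\delta_p^*(F,G)\in Form(\mathcal{L}\setminus\{p\})$, because substituting $\top$ and $\bot$ for $p$ removes every occurrence of $p$, so (after applying $\sigma$) the output genuinely lives in the reduced language without $p$.

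Next I would establish soundness, i.e. $\{F,G\}\models\delta_p^*(F,G)$. Let $\hat v$ be any model of $F\land G$. Either $\hat v(p)=1$, in which case $\hat v\models(F\land G)\{p/\top\}$, or $\hat v(p)=0$, in which case $\hat v\models(F\land G)\{p/\bot\}$; in both cases $\hat v$ satisfies the disjunction. This is the elementary substitution-lemma argument for propositional logic, and it is the easy half.

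For the forgetting property itself I would invoke Corollary \ref{cli}: since soundness is in hand, it is enough to show that every valuation $v$ on $\mathcal{L}\setminus\{p\}$ with $v\models\delta_p^*(F,G)$ extends to a model of $\{F,G\}$. Given such a $v$, the disjunction forces $v\models(F\land G)\{p/\top\}$ or $v\models(F\land G)\{p/\bot\}$; in the first case I would extend $v$ by setting $\hat v(p)=1$ and in the second by $\hat v(p)=0$, and then by the substitution lemma $\hat v\models F\land G$ with $\hat v\restriction_{\mathcal{L}\setminus\{p\}}=v$. This is precisely the lifting required by condition (2) of Corollary \ref{cli}, so $\delta_p^*$ is a forgetting operator, and hence so is $\delta_p^0$.

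The only delicate point — more bookkeeping than obstacle — is the careful application of the substitution lemma, namely that for a valuation $\hat v$ the truth value of $H\{p/\top\}$ under $\hat v$ equals the truth value of $H$ under the valuation agreeing with $\hat v$ off $p$ and assigning $p\mapsto 1$ (and symmetrically for $\bot$ and $0$). Once that lemma is stated, both directions are immediate, and the reduction through $\sigma\circ\delta\equiv\delta$ closes the proof. I expect no real difficulty; the main thing to be careful about is confirming that $\sigma$ preserves both the semantics and the language membership, which was already asserted before the definition.
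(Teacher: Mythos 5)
Your proposal is correct and takes essentially the same route as the paper, whose entire proof is the remark ``Easy by the Lifting Lemma'': your use of Corollary \ref{cli} (soundness plus the extension property) is precisely the rigorous form of that remark, since the Lifting Lemma itself presupposes the operator is already a forgetting operator and it is the corollary that supplies the converse. Your explicit verification of soundness, the lifting of a model of each disjunct by assigning $p$ the matching truth value, and the reduction through $\sigma\circ\delta\equiv\delta$ simply fill in the details the paper leaves implicit.
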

\begin{proof}  Easy  by the  Lifting Lemma
\end{proof}

\begin{example} $F=p\to q$ and $G=p\land r \to \lnot q$:
$$\begin{array}{ll}
\delta_p^0(p\to q, p\land r \to \lnot q) & = \sigma(\delta^*_p(p\to q, p\land r \to \lnot q) \\
							& =  \sigma \left( [(p\to q) \land (p\land r \to \lnot q)]\{p/\top\} \lor \right. \\ 
							& \left. \qquad [(p\to q )\land (p\land r \to \lnot q)]\{p/\bot\}\right) = \\
							& =  \sigma \left( [(\top \to q) \land (\top \land r \to \lnot q)] \lor \right. \\ 
							& \left. \qquad [(\bot\to q )\land (\bot\land r \to \lnot q)]\right) = \\
							& =  \sigma \left( \sigma [(\top \to q) \land (\top \land r \to \lnot q)] \lor \right. \\ 
							& \left. \qquad  \sigma[(\bot\to q )\land (\bot\land r \to \lnot q)]\right) = \\
							& =  \sigma \left( [q \land (r \to \lnot q)] \lor \top \right) = \top\\
\end{array}
$$
\end{example}

\begin{corollary} Let $ \delta : Form(\mathcal L)\times Form(\mathcal L) \to Form(\mathcal L\setminus \{ p\})$.  The following conditions are equivalent:
\begin{enumerate}
\item $\delta$ is a forgetting operator for $p$.
\item $\delta\equiv \delta^0_p$
\end{enumerate}
\end{corollary}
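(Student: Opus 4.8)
The plan is to prove the equivalence of the two conditions separately, treating the easy implication first. For the direction $(2)\Rightarrow(1)$, suppose $\delta\equiv\delta^0_p$. Since the preceding proposition establishes that $\delta^0_p$ is a forgetting operator for $p$, and being a forgetting operator is a property invariant under logical equivalence (the defining condition $\delta(F,G)\equiv[\{F,G\},\mathcal L\setminus\{p\}]$ depends only on the equivalence class of the output), it follows immediately that $\delta$ is also a forgetting operator for $p$. This step is essentially a routine observation.

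For the harder direction $(1)\Rightarrow(2)$, suppose $\delta$ is a forgetting operator for $p$. The goal is to show $\delta(F,G)\equiv\delta^0_p(F,G)$ for all $F,G\in Form(\mathcal L)$. The natural strategy is to invoke the definition of forgetting operator: both $\delta$ and $\delta^0_p$ satisfy
$$\delta(F,G)\equiv[\{F,G\},\mathcal L\setminus\{p\}]\equiv\delta^0_p(F,G),$$
where the first equivalence holds because $\delta$ is a forgetting operator by hypothesis, and the second holds because $\delta^0_p$ is a forgetting operator by the preceding proposition. Since logical equivalence is transitive, we conclude $\delta\equiv\delta^0_p$. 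In other words, the canonical conservative retraction $[\{F,G\},\mathcal L\setminus\{p\}]$ serves as a common reference point: every forgetting operator for $p$ must agree with it up to equivalence, hence any two forgetting operators for $p$ agree with each other.

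I expect the main subtlety to lie not in the logical content but in being careful about what $\delta\equiv\delta^0_p$ means as a statement about operators. The notation $\delta\equiv\delta^0_p$ must be read pointwise: for every pair $(F,G)$, the output formulas are logically equivalent. Once this reading is fixed, the proof is a one-line transitivity argument riding on the definition of forgetting operator and on the already-proved fact that $\delta^0_p$ is one. There is no real obstacle here; the result is a clean corollary asserting that forgetting operators for a fixed variable are unique up to logical equivalence, with $\delta^0_p$ furnishing a concrete canonical representative. Accordingly I would keep the proof very short, essentially stating the chain of equivalences above and noting the pointwise interpretation.
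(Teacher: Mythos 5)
Your proof is correct and matches the paper's intent exactly: the paper states this corollary without any written proof, precisely because it follows by the transitivity argument you give, chaining the defining condition $\delta(F,G)\equiv[\{F,G\},\mathcal L\setminus\{p\}]$ with the preceding proposition that $\delta^0_p$ is a forgetting operator. Your added remark that the property is invariant under pointwise logical equivalence (which handles the direction $(2)\Rightarrow(1)$) is the right reading of the notation $\delta\equiv\delta^0_p$, so nothing is missing.
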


\subsection{Refining the process}

In order to make the implementation of the realistic algorithms, we will use the following result that significantly reduces the number of applications of the variable forgetting rules in practice. Besides, using the fact that it's symmetrical, we can even further reduce the number of applications of the operators:

\begin{proposition} In above conditions
$$
\delta_p[K] \equiv \{ F \ : \ p \mbox{ does not occur in } F \} \cup
 \delta_p[\{F \in   K \ : \ p \mbox{ occurs in } F \}]$$
\end{proposition}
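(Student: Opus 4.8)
The plan is to reduce everything to the definition $\delta_p[K] = \{\delta_p(F,G) \ : \ F,G \in K\}$ together with Corollary \ref{basicos} and the symmetry of $\delta_p$. First I would split $K$ according to the occurrence of $p$: write $K_0 = \{F \in K \ : \ p \notin var(F)\}$ and $K_1 = \{F \in K \ : \ p \in var(F)\}$, so that $K = K_0 \cup K_1$; reading the first set on the right-hand side of the statement as ranging over members of $K$, the claimed equivalence is exactly $\delta_p[K] \equiv K_0 \cup \delta_p[K_1]$. Since a KB is equivalent to the conjunction of its members, it suffices to show that $\bigwedge_{F,G\in K}\delta_p(F,G)$ and $\bigl(\bigwedge_{H\in K_0} H\bigr)\land\bigl(\bigwedge_{F,G\in K_1}\delta_p(F,G)\bigr)$ have the same models.

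Next I would classify each pair $(F,G)\in K\times K$ into three types and simplify the corresponding conjunct $\delta_p(F,G)$: (i) both in $K_0$, where $p\notin var(F)$ and Corollary \ref{basicos} gives $\delta_p(F,G)\equiv\{F,\delta_p(G,G)\}\equiv\{F,G\}$, two formulas of $K_0$; (ii) $F\in K_0$ and $G\in K_1$ (and, by symmetry of $\delta_p$, the mirror case), where again $p\notin var(F)$, so $\delta_p(F,G)\equiv\{F,\delta_p(G,G)\}$, a $K_0$-formula conjoined with $\delta_p(G,G)\in\delta_p[K_1]$; and (iii) both in $K_1$, where $\delta_p(F,G)\in\delta_p[K_1]$ by definition.

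Then I would collect conjuncts. Types (i) and (ii) contribute precisely the formulas of $K_0$ (every $H\in K_0$ appears by taking $F=G=H$, and nothing outside $K_0$ is produced on that side) together with the diagonal terms $\delta_p(G,G)$ for $G\in K_1$; type (iii) contributes all of $\delta_p[K_1]$. Because each diagonal $\delta_p(G,G)$ is already one of the type-(iii) conjuncts, the total conjunction collapses to $\bigl(\bigwedge_{H\in K_0}H\bigr)\land\bigl(\bigwedge_{F,G\in K_1}\delta_p(F,G)\bigr)$, which is precisely $K_0\cup\delta_p[K_1]$, yielding both entailments at once.

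The delicate point is purely the bookkeeping of the collection step: one must check both that every member of $K_0$ and every member of $\delta_p[K_1]$ genuinely occurs as a conjunct (so nothing is lost in either direction) and that the mixed pairs of type (ii) introduce nothing new, their $\delta_p(G,G)$ terms being subsumed by the diagonal of $\delta_p[K_1]$. The use of symmetry is essential precisely to handle the mirror of case (ii) without invoking a left-hand analogue of Corollary \ref{basicos}. A purely semantic alternative is also available through the Lifting Lemma, since $v$ models the left side iff it extends to a model of $K$, and as $K_0$-formulas are insensitive to the value of $p$ the same extension witnessing $v\models[K_1,\mathcal L\setminus\{p\}]$ also satisfies $K_0$ once $v\models K_0$; but the syntactic route above is more direct and makes the reduction in rule applications advertised before the statement explicit.
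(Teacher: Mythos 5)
Your proof is correct and takes essentially the same route as the paper's: both split the pairs $(F,G)$ into the same three cases according to whether $p$ occurs in $F$ and/or $G$, dispatch each case via Corollary \ref{basicos} and the symmetry of $\delta_p$, and note that the diagonal terms $\delta_p(G,G)$ are absorbed into $\delta_p[\{F\in K : p \mbox{ occurs in } F\}]$. The only difference is presentational: the paper explicitly proves just the entailment from right to left (treating the converse, which needs $\delta_p(H,H)\equiv H$ for $H$ not containing $p$ and the inclusion of the second set, as immediate), whereas you spell out that bookkeeping in both directions.
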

\begin{proof}
 Let us denote by $A$  the first set of formulas and by $B$  the second one. We just have to prove that  $A\cup B\models \delta_p[K]$.

Let $\delta_p(F,G)$ be a formula of $\delta_p[K]$. By symmetry, it is enough to consider only three cases:
\begin{itemize}
\item $p\notin var(F) \cup var(G)$. Then $A\models F\land G \equiv\delta_p(F,G) $
\item $p\in var(G) \setminus var(F)$. Then $A\cup B \models F\land \delta_p(\top, G) \equiv \delta_p(F,G) $
\item $p\in  var(F) \cap var(G)$. Then $\delta_p(F,G) \in B$
\end{itemize}

\end{proof}

\subsection{Experimental results}

In this section, as an illustration, we will execute variable forgetting on a set of knowledge bases to show the efficiency of the rule proposed in this paper with respect to the canonical rule in the processing of forgetting operations (fundamentally, we will focus on the cost in space, the number of symbols used in the representation). Each experiment has been performed by randomly choosing some variables present in the knowledge base and order on them (common for both operators) and we will progressively apply the corresponding operators\footnote{Although it is irrelevant to calculate the size of the knowledge bases, to estimate the time we have used a MacBook Air with a 1.6 GHz Intel Core i5 processor and 8 GB 1600 MHz DDR3 memory. The operating system is macOS High Sierra 10.13.5}. Below we describe the datasets and the results obtained.

The examples are taken from the SAT Competition 2018 website\footnote{\url{http://sat2018.forsyte.tuwien.ac.at}}. In Fig. \ref{inicio} their initial size (both in propositional logic and polynomial transformed) is shown. The total time used in each dataset experiment (i.e. in the application of all the corresponding operators chosen for that experiment) for both the proposed and the canonical operator is also shown. In the following tables the results of the progressive implementation of the operators are shown.

\begin{figure}[t]
{\footnotesize
\begin{tabular}{|l|r|r|r|r|r|r|}\hline 
Name                                               & Size      & Size                    & seconds & Space (bytes)   & seconds  & Space (bytes) \\ 
                                                          &  form.  & pol.  & Can. rule  & Can.             & Indep.  & Indep.  \\ \hline \hline
{\scriptsize mp1-bsat180-648}  &  14715 & 18176 & 6.86&  1,417,335,160  & 3.94  &  1,407,227,832   \\ \hline
{\scriptsize unsat250} & 24746 &  31800 & 74.60 & 26,453,571,424  & 0.92  &  864,549,664   \\ \hline
{\scriptsize mp1-squ\_any\_s09x07\_c27\_bail\_UNS} & 36619 & 60318 & 410.87& 81,535,687,624  & 5.24  &  1,706,513,448   \\ \hline
{\scriptsize g2-modgen-n200-m90860q08c40-13698} & 236112 & 319913 & 31.55& 16,500,421,928  & 6.58  &  5,153,583,936   \\ \hline
{\scriptsize mp1-klieber2017s-1000-023-eq} & 351346 & 416531 &  out of time & -- & 518.07& 426,862,970,928   \\ \hline
{\scriptsize mp1-tri\_ali\_s11\_c35\_bail\_UNS} & 37606 & 43493  & 4.34& 1,510,720,256  & 0.48  &  498,118,016   \\ \hline
{\scriptsize mp1-Nb5T06} & 211656 & 252250 & 75.89 & 38,378,129,872 & 1.03  &  1,858,302,896   \\ \hline
\end{tabular}
}
\caption{SAT instances used in the experiments: size (both formula and polynomial representation) and total cost using both rules \label{inicio}}
\end{figure}

\begin{center}
{\footnotesize mp1-bsat180-648} 
\end{center}
 \begin{multicols}{2}
      \begin{tikzpicture} 
       \begin{axis}[small,axis lines=middle,grid,
       legend pos= north west]
         \addplot+[no marks,red] coordinates{(1,15428)(2,16900)(3,25131)(4,26839)(5,34908)(6,36735)(7,54530)(8,55231)(9,57073)(10,169330)(11,384996)};
         \addlegendentry{\footnotesize{Canonical}}
         \addplot+[no marks,green] coordinates{(1,19757)(2,20876)(3,21866)(4,23516)(5,24612)(6,29269)(7,33490)(8,34985)(9,47857)(10,59215)(11,60313)};
          \addlegendentry{\footnotesize{Indep. rule}}
       \end{axis}
     \end{tikzpicture}      
    
\begin{scriptsize}
\begin{tabular}{|c|c|c|}
\hline
& Size of KB & Size of KB \\ 
Step & & (Indep.  \\ 
& (canonical) & rule)\\ \hline
1 & 15428 & 19757\\  
2 & 16900 & 20876 \\  
3 & 25131 & 21866 \\ 
4 & 26839 & 23516 \\ 
5 & 34908 &  24612\\ 
6 & 36735 & 29269 \\ 
7 & 54530 & 33490\\ 
8 & 55231 & 34985 \\ 
9 & 57073& 47857 \\ 
10 & 169330 & 59215 \\ 
11 & 384996 & 60313 \\ \hline
\end{tabular}     
\end{scriptsize}
\end{multicols}

\newpage

\begin{center}
{\footnotesize unsat250} 
\end{center}
\begin{multicols}{2}
      \begin{tikzpicture} 
       \begin{axis}[small,axis lines=middle,grid,
       legend pos= north west]
         \addplot+[no marks,red] coordinates{(1,26280)(2,28906)(3,34141)(4,68952)(5,191069)(6,193949)(7,200293)(8,214468)(9,231861)(10,407926)(11,939192)};
         \addlegendentry{\footnotesize{Canonical}}
         \addplot+[no marks,green] coordinates{(1,34091)(2,35443)(3,35891)(4,37731)(5,39242)(6,40393)(7,54677)(8,56846)(9,61119)(10,62330)(11,64986)};
          \addlegendentry{\footnotesize{Indep. rule}}
       \end{axis}
     \end{tikzpicture}      

\begin {scriptsize}
\begin{tabular}{|c|c|c|}
\hline
& Size of KB & Size of KB \\ 
Step & & (Indep.  \\ 
& (canonical) & rule)\\ \hline
1 & 26280  & 34091  \\  
2 & 28906  & 35443  \\  
3 & 34141& 35891  \\ 
4 &  68952 & 37731  \\ 
5 & 191069  & 39242  \\ 
6 & 193949  & 40393  \\ 
7 & 200293  & 54677  \\ 
8 & 214468  & 56846  \\ 
9 & 231861  &  61119 \\ 
10 & 407926  & 62330  \\ 
11 & 939192  & 64986  \\ \hline
\end{tabular}     
\end {scriptsize}
\end{multicols}

\begin{center}
{\footnotesize mp1-squ\_any\_s09x07\_c27\_bail\_UNS} 
\end{center}
\begin{multicols}{2}
      \begin{tikzpicture} 
       \begin{axis}[small,axis lines=middle,grid,
       legend pos= north west]
         \addplot+[no marks,red] coordinates{(1,37028)(2,37557)(3,38496)(4,53437)(5,411028)(6,589455)(7,731546)(8,825714)(9,890920)(10,995522)};
         \addlegendentry{\footnotesize{Canonical}}
         \addplot+[no marks,green] coordinates{(1,62498)(2,64058)(3,63921)(4,66035)(5,65907)(6,67631)(7,67512)(8,68884)(9,68774)(10,69832)};
          \addlegendentry{\footnotesize{Indep. rule}}
     \end{axis}
     \end{tikzpicture}      

\begin {scriptsize}
\begin{tabular}{|c|c|c|}
\hline
& Size of KB & Size of KB \\ 
Step & & (Indep.  \\ 
& (canonical) & rule)\\ \hline
1 &  37028 & 62498  \\  
2 & 37557  & 64058  \\  
3 & 38496 & 63921  \\ 
4 & 53437  & 66035  \\ 
5 & 411028  & 65907  \\ 
6 & 589455  & 67631  \\ 
7 & 731546 & 67512  \\ 
8 &  825714 & 68884  \\ 
9 & 890920  &  68774 \\ 
10 &  995522  &  69832  \\ \hline
\end{tabular}     
\end {scriptsize}
\end{multicols}

\begin{center}
{\footnotesize g2-modgen-n200-m90860q08c40-13698} 
\end{center}

\begin{multicols}{2}
      \begin{tikzpicture} 
       \begin{axis}[small,axis lines=middle,grid,
       legend pos= north west]
         \addplot+[no marks,red] coordinates{(1,237141)(2,242392)(3,290876)(4,319155)(5,3339545)(6,3340602)(7,3346440)(8,3362678)(9,3367621)(10,3390750)};
         \addlegendentry{\footnotesize{Canonical}}
         \addplot+[no marks,green] coordinates{(1,321926)(2,326062)(3,327171)(4,332485)(5,334528)(6,341128)(7,344213)(8,348381)(9,373298)(10,537880)};
          \addlegendentry{\footnotesize{Indep. rule}}
       \end{axis}
     \end{tikzpicture}      

\begin {scriptsize}
\begin{tabular}{|c|c|c|}
\hline
& Size of KB & Size of KB \\ 
Step & & (Indep.  \\ 
& (canonical) & rule)\\ \hline
1 & 237141  & 321926  \\  
2 & 242392  & 326062  \\  
3 & 290876 & 327171  \\ 
4 & 319155  & 332485  \\ 
5 & 3339545  & 334528  \\ 
6 & 3340602  & 341128  \\ 
7 & 3346440  & 344213  \\ 
8 & 3362678  & 348381  \\ 
9 & 3367621 & 373298  \\ 
10 & 3390750  & 537880  \\  \hline
\end{tabular}     
\end {scriptsize}
\end{multicols}

\begin{center}
{\footnotesize mp1-klieber2017s-1000-023-eq} 
\end{center}
\begin{multicols}{2}
      \begin{tikzpicture} 
       \begin{axis}[small,axis lines=middle,grid,
       legend pos= north west]
         \addplot+[no marks,red] coordinates{(1,351346)(2,1176815)(3,2233525)(4,7599271)(5,74792215)};
         \addlegendentry{\footnotesize{Canonical}}
         \addplot+[no marks,green] coordinates{(1,1615542)(2,28418531)(3,28418518)(4,28418513)(5,28419039)(6,28419304)};
          \addlegendentry{\footnotesize{Indep. rule}}
       \end{axis}
     \end{tikzpicture}      

\begin {scriptsize}
\begin{tabular}{|c|c|c|}
\hline 
& Size of KB & Size of KB \\ 
Step & & (Indep.  \\ 
& (canonical) & rule)\\ \hline
1 & 351346  & 1615542  \\  
2 & 1176815  &  28418531 \\  
3 & 2233525 &  28418518 \\ 
4 & 7599271  & 28418513  \\ 
5 & 74792215  & 28419039  \\ 
6 & out of time  & 28419304  \\ 
\hline
\end{tabular}     
\end {scriptsize}
\end{multicols}

\begin{center}
{\footnotesize mp1-tri\_ali\_s11\_c35\_bail\_UNS} 
\end{center}

\begin{multicols}{2}
      \begin{tikzpicture} 
       \begin{axis}[small,axis lines=middle,grid,
       legend pos= north west]
         \addplot+[no marks,red] coordinates{(1,37877)(2,38312)(3,38884)(4,50809)(5,54837)(6,213958)(7,237610)};
         \addlegendentry{\footnotesize{Canonical}}
         \addplot+[no marks,green] coordinates{(1,44127)(2,44943)(3,44771)(4,44634)(5,48200)(6,48072)(7,51134)};
          \addlegendentry{\footnotesize{Indep. rule}}
       \end{axis}
     \end{tikzpicture}

\begin {scriptsize}
\begin{tabular}{|c|c|c|}
\hline
& Size of KB & Size of KB \\ 
Step & & (Indep.  \\ 
& (canonical) & rule)\\ \hline
1 & 37877  &  44127 \\  
2 & 38312  & 44943  \\  
3 & 38884 & 44771  \\ 
4 & 50809  &  44634  \\ 
5 & 54837  & 48200  \\ 
6 & 213958  & 48072  \\ 
7 & 237610  & 51134  \\ 
\hline
\end{tabular}     
\end {scriptsize}
\end{multicols}

\begin{center}
{\footnotesize mp1-Nb5T06} 
\end{center}

\begin{multicols}{2}
   \begin{tikzpicture} 
       \begin{axis}[small,axis lines=middle,grid,
       legend pos= north west]
         \addplot+[no marks,red] coordinates{(1,212003)(2,213563)(3,222107)(4,222446)(5,224030)(6,232858)(7,233189)(8,234797)(9,243909)(10,294441)(11,3975580)(12,3975911)(13,3977425)(14,3985776)(15,4033048)(16,7353991)(17,22919499)};
         \addlegendentry{\footnotesize{Canonical}}
         \addplot+[no marks,green] coordinates{(1,252400)(2,252570)(3,252724)(4,252862)(5,253000)(6,253482)(7,253620)(8,254102)(9,254240)(10,254378)(11,254516)(12,254998)(13,255136)(14,255278)(15,255760)(16,255898)(17,256036)};
          \addlegendentry{\footnotesize{Indep. rule}}
       \end{axis}
   \end{tikzpicture}      

\begin {scriptsize}
\begin{tabular}{|c|c|c|}
\hline
& Size of KB & Size of KB \\ 
Step & & (Indep.  \\ 
& (canonical) & rule)\\ \hline
1 & 212003  & 252400  \\  
2 & 213563  &  252570 \\  
3 & 222107 & 252724  \\ 
4 & 222446  & 252862  \\ 
5 & 224030  & 253000  \\ 
6 & 232858  & 253482  \\ 
7 & 233189  & 253620  \\ 
8 & 234797  &  254102 \\ 
9 & 243909  & 254240  \\ 
10 & 294441  & 254378  \\ 
11 & 3975580  &  254516 \\
12 & 3975911  & 254998  \\
13 & 3977425  & 255136  \\
14 & 3985776  & 255278  \\
15 & 4033048  & 255760  \\
16 &  7353991 & 255898  \\
17 & 22919499  &  256036 \\ \hline
\end{tabular}     
\end {scriptsize}
\end{multicols}

As we have observed in the experiments, despite the fact that initially the transformation to polynomials uses slightly more space, the growth in the size of the knowledge base representation when progressively applied by the operators improves considerably the resources needed with respect to the canonical operator. Note, as we have already noted, that we have compared our operator (not restricted to a sub-language of the propositional logic) with the non-specialized operator.

\section{Conclusions and future work}
 

As we have already mentioned in the introduction, the algebraic interpretation of propositional logic represents a valuable bridge for applying algebraic techniques in KRR. In this paper we have proposed a new algebraic model to solve problems on KRR  whose  knowledge is represented by propositional  logic. We are not concerned here about the practical computational cost of the use of independence rule, the aim of a next paper. We focused on its theoretical foundations and potential applications in KRR instead. 

The main technique introduced in the paper is the use of a new rule inspired in the projection of algebraic varieties and polynomial derivatives. Also, {it is justified} that with Boolean derivatives {is possible to} determine  specific cases of conditional independence, the formula-variable independence {relativized to} a KB. The tools have been used  to solve questions related with distilling KB to obtain relatively simpler KBs to solve context-based questions. 

 Throughout the paper we have {remarked} some works related to  the tools used here. These works are driven to exploit the computation and use of Gröbner Basis, whilst in this paper a new method is proposed (specific for $\fdos$). 

With regard to the practical complexity of the proposed method, the application of the independence rule is reduced to the the algebraic simplification of polynomials. Its calculation appears at two levels: first in the multiplication of Boolean polynomials (or in finite fields in general), since the rule of independence is reduced to products, and second in the transformation of formulas into polynomials for a complete knowledge base.

With respect to the first question, the product of Boolean polynomials is a long-studied problem, for which refined algorithms exist (see e.g. \cite{faster}). The computational complexity of the application of the rule is irrelevant (it consists mainly of four polynomial products) compared to the second issue, the translation of formulas into polynomials. Such kind of transformation has been widely used to compile and run (using Gr\"{o}bner databases) rule-based programs,  for example. The problems where the polynomial interpretation of the formulas (rules, for example) of a KB is used are problems where the formulas have very limited complexity. 
In this type of program, the number of variables that appear in a rule is very small compared to the total number (see e.g. \cite{depre,fault,menu}). Therefore the computation of conservative retraction is feasible with our approach. Moreover, in the context of algebraic interpretation of logic reasoning,  the results shown in this article allow us to replace the use of "black box" implementations (those that use a computer algebra system to compile and reason) with another also algebraic but "white box" type, which can be verified/certified.
The complexity of algebraic simplifications can be high when both the number of variables is large and the number of variables that occur in each knowledge base formula is relatively high (with respect to the size of the total set of variables). However it is not common (nor advisable)  in programming paradigms such as logical programming (answer set or rule-based programming), DL ontologies, etc.

With respect to the use of independence rule as  SAT solver, although the rule is (refutationaly) complete, its intended use in this article is to calculate conservative retractions, and we have not presented a SAT algorithm other than the intuitive one (rule application saturation). In the GitHub repository \url{https://github.com/DanielRodCha/SAT-Pol}, variants are being developed to solve more efficiently SAT problems. With regard to the complexity of the problem of the variable forgetting in the case of propositional logic, it has been studied in considerable depth due to its relationship with the SAT problem (and, in general, with problems with Boolean functions) both for the foundations on the complete logic and for various fragments of this logic (see e.g. \cite{for,MSMP}).  As we have shown in the section devoted to experiments, computational cost of polynomials computations are irrelevant in practice, bearing in mind that our method is nos specialized for fragments of propositional logic, since it works on the full propositional logic. That section focuses on comparing the rule with the general rule of forgetting variables, since it does not seem appropriate to compare our proposal with other approaches specialized in fragments of propositional logic.

As future work we  will intend to work in two complementary research lines. On the one hand we intend to carry out the extension to many-valued logics
and their applications \cite{jalonso,jalonso2,polynomial} as well as to tackle the knowledge forgetting problem in modal logics \cite{S5}. 
If the underlying logic is many-valued, the algebraic varieties of the polynomial translations of the propositions do not behave intuitively (see \cite{racsam}). Therefore,  we need to use an appropiate version of the the Nullstellensatz Theorem. Also, for  this research line, a careful generalization of  the concept of Boolean derivatives,
with nice  logical meaning, has to be carried out  \cite{lebron}. On the other hand,   it is possible  to use  our model - in a similar way to the applications presented in the paper- for implementing expert systems based on the knowledge of different experts as in \cite{mcs15},  for  diagnosis (see \cite{fibro}) or -in the behalf of authors- more promising field of inconsistency management \cite{incon}.

\section{Acknowledgements}

This work was partially supported by  TIN2013-41086-P project (Spanish Ministry of Economy and Competitiveness), co-financed with FEDER funds. We also acknowledge the reviewers and editors for their valuable comments and suggestions, which have substantially improved the content of this article.

\section{References}

\bibliographystyle{elsarticle-num} 

\bibliography{biblio} 

\end{document}